\documentclass[12pt]{article}

\usepackage{graphicx}
\usepackage{epsfig}
\usepackage{amsmath}
\usepackage{amsthm}
\usepackage{amssymb}
\usepackage{algorithm}
\usepackage{algorithmic}
\usepackage{color}
\usepackage[numbers]{natbib}
\usepackage{multirow}

\def\argmin{\mathop{\rm argmin}}
\def\arginf{\mathop{\rm arginf}}

\newcommand{\real}{\ensuremath{\mathbb{R}}}

\newcommand{\ltwo}{\ensuremath{\mathbb{L}^2}}
\newcommand{\lone}{\ensuremath{\mathbb{L}^1}}

\newcommand{\inner}[2]{\left\langle #1,#2 \right\rangle}

\newcommand{\sumton}[2]{{\overset{#2}{\underset{#1}{\sum }}}}

\newtheorem{lemma}{Lemma}
\newtheorem{theorem}{Theorem}
\newtheorem{definition}{Definition}

%\startlocaldefs
%\numberwithin{equation}{section}
%\theoremstyle{plain}
%\newtheorem{thm}{Theorem}[section]
%\endlocaldefs

\begin{document}

\title{\bf Intensity Estimation for Poisson Process with Compositional Noise}
\author{Glenna Schluck, Wei Wu, Anuj Srivastava \\ 
{\it Florida State University} }
\date{}

\maketitle

%\thankstext{T1}{Footnote to the title with the `thankstext' command.}

%\begin{aug}
%\author{\fnms{Glenna} \snm{Gordon}\thanksref{t1}\ead[label=e1]{ggordon@fsu.edu}}
%\and
%\author{\fnms{Wei} \snm{Wu}\thanksref{t2}\ead[label=e2]{wwu@stat.fsu.edu}}
%\and
%\author{\fnms{Anuj} \snm{Srivastava}\thanksref{t2}\ead[label=e3]{anuj@stat.fsu.edu}}
%
%\address{Department of Statistics\\
%Florida State University \\
%Tallahassee, FL 32312, USA \\
%\printead{e1,e2,e3}}
%
%
%\thankstext{t1}{Corresponding Author}
%\thankstext{t2}{NSF Grant \#1208959}
%\runauthor{G. Gordon et al.}
%
%\affiliation{Florida State University}
%
%\end{aug}

\begin{abstract}
Intensity estimation for Poisson processes is a classical problem and has been extensively studied over the past few decades.  Practical observations, however, often contain compositional noise, i.e. a nonlinear shift along the time axis, which makes standard methods not directly applicable.   The key challenge is that these observations are not ``aligned'', and registration procedures are required for successful estimation.  In this paper, we propose an alignment-based framework for positive intensity estimation.  We first show that the intensity function is area-preserved with respect to compositional noise. Such a property implies that the time warping is only encoded in the normalized intensity, or density, function.  Then, we decompose the estimation of the intensity by the product of the estimated total intensity and estimated density.  The estimation of the density relies on a metric which measures the phase difference between two density functions.  An asymptotic study shows that the proposed estimation algorithm provides a consistent estimator for the normalized intensity.  We then extend the framework to estimating non-negative intensity functions.  The success of the proposed estimation algorithms is illustrated using two simulations. Finally, we apply the new framework in a real data set of neural spike trains, and find that the newly estimated intensities provide better classification accuracy than previous methods. 
\end{abstract}

%\begin{keyword}[class=AMS]
%\kwd[Primary ]{62M09}
%\end{keyword}
%
\noindent {\bf Keywords:}
intensity estimation,
Poisson process,
compositional noise,
functional data analysis,
functional registration

%
%\tableofcontents
%\end{frontmatter}

\section{Introduction}
The study of point processes is one of the central topics in stochastic processes and has been widely used to model discrete events in continuous time.  In particular, the Poisson process, a common point process, has the most applications \cite{Chiang-etal2005, kolaczyk99, nowak_timmermann1998}. Classical examples include the arrivals of park patrons at an amusement park over a period of time, the goals scored in an association football match, and the clicks on a particular web link in a given time period.  Recently, Poisson processes have been used to characterize spiking activity in various neural systems \cite{brown1998, brockwell2004}. In order to use a Poisson process in applications, one key step is to estimate its intensity function from a given sequence of observed events. 

The estimation of the intensity function of a Poisson process has been studied extensively and various estimation methods have been proposed. If the intensity can be assumed to have a known parametric form, then likelihood-based methods can be used to estimate the model parameters. However, in many cases, the shape of the intensity is unknown and estimation requires the implementation of non-parametric methods. Non-parametric estimation methods provide more flexibility than parametric methods and can better characterize the underlying intensity function.  A number of approaches have been proposed over the past three decades, including wavelet-based methods \cite{donoho93, kolaczyk99, rb2010} and kernel-based methods \cite{Bartoszynski-etal1981, Chiang-etal2005, diggle1985}.  In the case where prior knowledge about the process or shape of the intensity is known, Bayesian methods can be adopted and they often lead to a more accurate estimation  \cite{arjas-gasbarra1994, guida-etal1989, timmermann-nowak1997}.  

Treating a neural spike train as a realization of Poisson process, one can consider the example depicted in Fig. \ref{fig:spikexintro}. In this case, the neural spiking activity, which is associated with certain movement behavior \cite{WuSrivastavaJCNS11}, was recorded (see detail of the data in Sec. \ref{subsec:spike1}). The process was repeated for 30 trials and the resulting spike trains are shown in Fig. \ref{fig:spikexintro}A. Notice that in each repetition of the same movement, there is a gap in the spikes that occurs at slightly different times with variable lengths. This time shift in the gap in spikes is indeed an example of the notion of phase variability or compositional noise, a central topic in functional data analysis. The observed gap in spikes should be reflected in the underlying Poisson intensity estimate. However, using kernel-based estimation methods without accounting for phase variability results in an intensity estimate (shown in red in Fig. \ref{fig:spikexintro}B) that does not capture this gap in spiking activity. The method introduced later in this paper does consider the presence of phase variability and yields an estimate of the underlying intensity of the spike train that clearly depicts the observed gap in the spiking activity (shown in blue in Fig. \ref{fig:spikexintro}B). Therefore, it is important to develop estimation procedures that consider the presence of phase variability in repeated observations of the same process, and that is the goal of this paper. 

\begin{figure}[ht]
\begin{center}
\begin{tabular}{cc}
\textbf{A} & \textbf{B} \\
\includegraphics[height=1.7in]{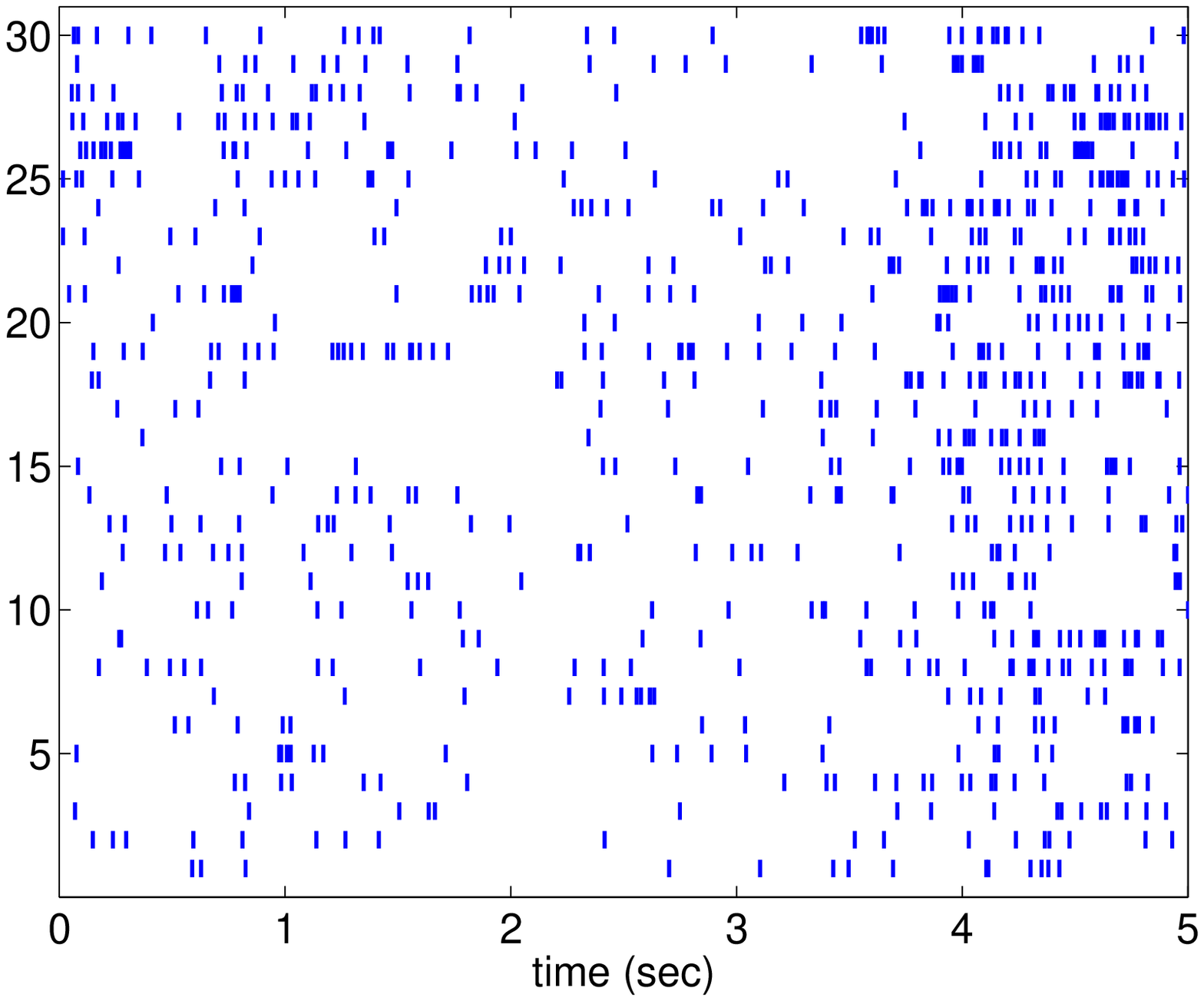}& 
\includegraphics[height=1.7in]{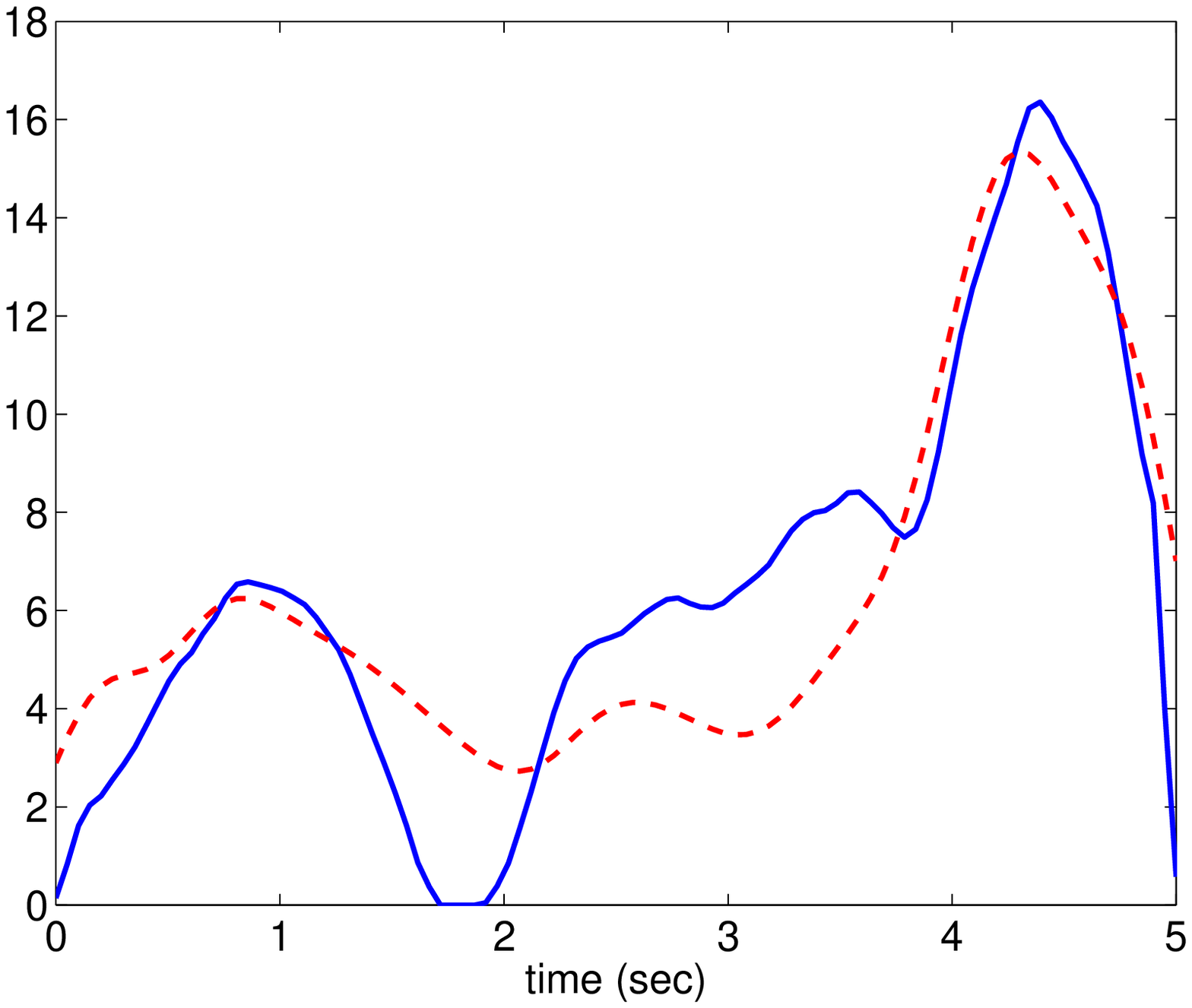}
\end{tabular}
\caption{Intensity estimation example. \textbf{A}. 30 spike trains. \textbf{B}. Estimated intensities by considering (solid blue) and not considering (dashed red) compositional noise.} \label{fig:spikexintro}
\end{center}
\end{figure}

%There are many areas of functional data analysis where the notion of phase variability is important. 
One key concept in functional data analysis where phase variability plays the central role is the notion of function registration, or alignment. Indeed, function registration is an important topic in functional data analysis and a significant amount of research progress has been made over the past two decades \cite{ramsay-li-RSSB:98, gervini-gasser-RSSB:04, muller-biometrika:2008, james:10, TuckerCSDA13}.  
In order to properly register functions, one must consider two types of variability present in data: phase and amplitude variability. Phase variability describes the degree of ``unalignment'' in the data, and amplitude variability is the remaining variability in the vertical axis after alignment. The goal of function registration is to align the functions by removing phase variability. If analysis (such as principal component analysis or regression) is conducted on data which are not well aligned, one may obtain poor or undesired results. 

While function registration has been extensively studied, the notion of aligning point processes with compositional noise has not been well studied -- all aforementioned intensity estimation methods are based on the assumption there is no phase variability in the observed processes. However, as indicated in the above spike train example, that is not always a reasonable assumption.  %For processes, such as the neural spike train example, that have compositional noise present in the observed processes, the assumption that the data do not have compositional noise present, or that they are temporally aligned, is violated. If current methods are applied to these problems, the representation of the intensity of the process will not be accurate.  Given the importance of this problem, it is surprising that very few published works consider the presence of phase variability in the data.
To understand the phase variability in point process observations, recent studies on intensity estimation in Poisson process have begun to identify and remove compositional noise during the estimation procedure.  For example,
Bigot and colleagues examined the estimation of the underlying intensity function for a set of linearly shifted Poisson processes \cite{bigot13}. They assumed that the intensity function is periodic and each realization of the process is warped according to a linear shift in time that follows a known distribution. Under the stated assumptions, the authors derived a wavelet-based estimator. They argued that the assumption of a linear shift in the observed processes is a reasonable assumption, particularly in an example of DNA Chip-Seq data. %to make for the application under study. 
However, there are many other cases where it is not reasonable to assume that the phase variability is a simple linear shift such as examples in the literature of functional registration \cite{ramsay-li-RSSB:98, TuckerCSDA13}. As a result, restricting the warping function to be strictly linear shifts may limit the general applicability of their method. In another recent study, Panaretos and Zemel proposed to separate amplitude and phase variation in order to align point processes \cite{panaretos16}. Basically, they extended the notion of the separation of phase and amplitude variation in functions to that of point processes. While the work of Panaretos and Zemel applies generally to point processes, the goal of their work is fundamentally different from the goal of the work in this project. Their goal is estimation of the probability measure and they comment that their work is not to be used for density estimation (see Section 3.4 of \cite{panaretos16}). The goal of the work in this project is intensity estimation, which will be shown reduces to density estimation.

In this paper we propose a new framework for intensity estimation of a Poisson process with compositional noise.  We show that the noise is only encoded in the normalized intensity, or density, function.   The estimation is based on our proposed metric which measures the phase difference between two density functions so the notion of the Karcher mean can be applied in the given framework.  Since the only parameter in the method is the bandwidth for the kernel density estimate, the proposed method is a mostly non-parametric method that yields a consistent estimator of the underlying intensity function.

The rest of this paper is organized as follows. In Section \ref{sec:method}, we present the new framework for positive intensity estimation and discuss its mathematical and computational properties.  Consistency theory on the estimation algorithm is given in Section \ref{sec:theory}.  In Section \ref{sec:nonnegdf}, we extend the estimation to nonnegative intensity functions.  The estimations on positive and nonnegative intensity are illustrated with two simulations, respectively, in Section \ref{sec:results}.  We then show the application of intensity estimation in a real dataset of neural spike trains.  Section \ref{sec:diss} summarizes the work.  Finally, all mathematical details are given in the Appendix.

\section{Method}\label{sec:method}
In this section, we present the new framework for positive intensity estimation of a Poisson process with non-linear time warping. Compositional noise is represented with time warping functions and, since the intensity of a Poisson process is a function, the representation of time warping is studied in the function space. The notation and representation in the function space that is given here is consistent to that in \cite{kurtek-wu-srivastava-NIPS:2011, srivastava-etal-FisherRao:11}. First, we will review the basics of Poisson processes \cite{ross10} and the representation of time warping in function space \cite{kurtek-wu-srivastava-NIPS:2011}.

\subsection{Review of Poisson Process and Time Warping Representation}\label{sec:pp}
  A Poisson process on the time domain $[0, 1]$ is a special type of counting process $N(t), t \in [0,1]$.   For simplification of notation, we only examine the domain $[0,1]$ in this paper, and the framework can be easily adapted to any finite time interval. In the classical theory of point processes, a Poisson process is defined based on an intensity function $\lambda(t) \ge 0$ and satisfies the following two conditions \cite{ross10}:
\begin{enumerate}
\item
Disjoint intervals have counts that are independent. In other words, the number of events occurring in the interval $(a, b)$ is independent of the number of events occurring in the interval $(c, d)$ if these two intervals are not overlapping.
\item 
The number of events in an interval $(a, b) \subset [0,1]$ follows a Poisson distribution with mean $\int_a^b \lambda(t)dt$. In other words, $$P(N(b) - N(a)=n) = \exp\left(-\int_a^b \lambda(t)dt\right)\dfrac{\left(\int_a^b \lambda(t)dt\right)^n}{n!}.$$
\end{enumerate}
We denote a Poisson process with intensity $\lambda(t)$ as $PP(\lambda(t))$.  For distinction, a Poisson distribution with mean $\mu$ is denoted as $Poisson(\mu)$, and a Poisson probability mass function with mean $\mu$ at $k$ is denoted as $Poisson(k; \mu) = e ^{-\mu} \mu^k / k!$.

We represent compositional noise with time warping functions. Since the intensity of a Poisson process is a function, we study the representation of time warping in the function space. 
Let $\Gamma$ be the set of all warping functions, where time warping is defined as an orientation-preserving diffeomorphism of the domain $[0, 1]$. That is, $$\Gamma = \{\gamma: [0,1] \to [0,1] | \gamma(0) =
0,\ \gamma(1)=1,\  0 < \dot \gamma < \infty\}.$$ Elements of $\Gamma$ form a {\em group} with function composition as the group action, and the identity in this group is the self-mapping $\gamma_{id}(t) = t$. For  any function $h$, we will use $\Vert h \Vert$ to denote its $\ltwo$ norm $(\int_0^1 h(t)^2 dt)^{1/2}$.

There are three different types of {\em (right) group actions} about time warping that can occur in the function space:
\begin{enumerate}
\item 
Amplitude-preserved: $f \rightarrow f \circ \gamma$,
\item 
Area ($\mathbb L^1$ norm)-preserved: $f\rightarrow \left(f \circ \gamma\right)\dot{\gamma}:=\left(f; \gamma\right)$,
\item 
Energy ($\mathbb L^2$  norm)-preserved: $f\rightarrow \left(f \circ \gamma\right)\sqrt{\dot{\gamma}}:=\left(f, \gamma\right)$,
\end{enumerate}
where $\circ$ denotes the conventional function composition. 
The properties on associativity and isometry of these three group actions are summarized in Table \ref{tab:props}. In particular, the amplitude-preserved group action is the conventional registration for functions with phase variability and has been extensively studied over the past two decades \cite{ramsay1998, kneipetal2000, kneip-ramsay:2008}.  The enery-preserved group action plays an essential role in the Fisher-Rao registration framework \cite{srivastava-etal-FisherRao:11}, where this action is applied in the Square-Root Velocity Function (SRVF) space (note: it is critical that in the Fisher-Rao framework there is a one-to-one correspondence between the energy-preserved SRVF space and the amplitude-preserved observational function space).  In the following sections of this manuscript, we will show that the compositional noise in the Poisson process intensity function is properly characterized by the area-preserved group action.

\begin{table}[ht]
\begin{center}
\caption{Properties of the three group actions}\label{tab:props}
\begin{tabular}{|c|c|c|c|}
\hline
Group Action & Associativity  & Isometry   \\
\hline
Amplitude-preserved & $ (f \circ \gamma_1) \circ \gamma_2 =  f \circ (\gamma_1 \circ \gamma_2) $ 
			 &  $\| f_1 \circ \gamma - f_2 \circ \gamma \|_{\mathbb{L}^\infty} = \|f_1 - f_2\|_{\mathbb{L}^\infty} $  \\ 
			\hline
Area-preserved  & $ ((f; \gamma_1); \gamma_2) =  (f; (\gamma_1 \circ \gamma_2)) $  &  $\| (f_1; \gamma) - (f_2; \gamma)\|_{\mathbb{L}^1} = \|f_1 - f_2\|_{\mathbb{L}^1} $\\ 
			\hline
Energy-preserved &   $ ((f, \gamma_1), \gamma_2) =  (f, (\gamma_1 \circ \gamma_2) )$   &  $\| (f_1, \gamma) - (f_2, \gamma)\|_{\mathbb{L}^2} = \|f_1 - f_2\|_{\mathbb{L}^2} $ \\ 
\hline
\end{tabular}
\end{center}
\end{table}

\subsection{Poisson Process with Compositional Noise}

Before formally stating the main problem, we review the classical estimation problem in Poisson processes: Given a set of independent realizations from a Poisson process on $[0, 1]$, how can we estimate the underlying intensity function?  By notation, the set of realizations are given in the following form, 
$$ R^i = (r_1^i, \cdots, r_{k_i}^i) \sim PP(\lambda(t)),$$ 
where $k_i \sim Poisson\left(\int_0^1\lambda(t)dt\right), \, i=1,2,\dots,n$. 
Various computational approaches have been developed to address this problem, which include penalized projection estimators \cite{rb2003}, wavelet methods \cite{willett2007, donoho93}, and estimators based upon thresholding rules \cite{rb2010}.   

In this paper, we assume the observed data are not $\{R^i\}$, but a warped version in the form $$S_i = (s_1^i, \cdots, s_{k_i}^i) = \gamma_i^{-1}(R^i) = (\gamma_i^{-1}(r_1^i), \cdots, \gamma_i^{-1}(r_{k_i}^i)),$$ where $\gamma_i$ is a random time warping in $\Gamma$, $i = 1, \cdots, n$.   That is, 
\begin{equation}
S_i = \gamma_i^{-1}(R^i),  \  \mbox{with } \ R^i \sim PP(\lambda(t)), \gamma_i \in \Gamma. 
\label{eq:warping} 
\end{equation}
{\em Given observations $\{S_i\}$, our goal is still to estimate the underlying intensity $\lambda(t)$}. 
To make the model identifiable, we add the constraint that the mean of $\{\gamma_i\}$ needs to be a scaled version of $\gamma_{id}$ (The detail on assumptions is clearly provided in Sec. \ref{sec:theory}).  Note that since the time warping can be in any nonlinear form, this estimation problem is a significant challenge.  
A recent study only examines the case when the warping is a simple linear shift along the time axis \cite{bigot13}.  

As the warping function $\gamma_i$ is random, the warped process $\gamma_i^{-1}(R_i)$  is no longer a Poisson process, but a Cox process.  Here we study,  ``Conditional on $\gamma_i$, is $\gamma_i^{-1}(R^i)$ still a Poisson process?  If this is true, what is the intensity function of that Poisson process?''  Our answer is yes to the first question and the intensity function of the new Poisson process is given as follows.  

\begin{lemma}
\label{lem:int_warp}
Suppose $R$ is a Poisson process with intensity $\lambda(t)$ on $[0,1]$ and $\gamma\in\Gamma$ is a given time warping function. 
Then $\gamma^{-1}(R)$ is also a Poisson process with intensity $\lambda\left(\gamma(t)\right)\dot{\gamma}(t).$ 
\end{lemma}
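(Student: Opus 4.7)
The plan is to verify the two defining conditions of a Poisson process for the transformed process $S = \gamma^{-1}(R)$ by reducing them to the corresponding statements for $R$ via the fact that $\gamma \in \Gamma$ is an orientation-preserving diffeomorphism. The crucial observation is that, because $\gamma$ is a strict monotone bijection of $[0,1]$, the event $\{s \in (a,b)\}$ is equivalent to $\{\gamma(s) \in (\gamma(a),\gamma(b))\}$. Hence the number of points of $S$ falling in $(a,b)$ equals the number of points of $R$ falling in $(\gamma(a),\gamma(b))$; I will use this point-count identity as the single bridge between the two processes.

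First I would dispatch the independence condition. Take any two disjoint subintervals $(a,b)$ and $(c,d)$ of $[0,1]$. Because $\gamma$ is strictly increasing, their images $(\gamma(a),\gamma(b))$ and $(\gamma(c),\gamma(d))$ are also disjoint. By the point-count identity, the counts of $S$ in $(a,b)$ and $(c,d)$ are exactly the counts of $R$ in $(\gamma(a),\gamma(b))$ and $(\gamma(c),\gamma(d))$, which are independent by the Poisson property of $R$. This gives condition (1).

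Next I would compute the distribution of the increment. By the same identity, $N_S(b)-N_S(a)$ equals $N_R(\gamma(b))-N_R(\gamma(a))$ in distribution, which is $\mathrm{Poisson}\!\left(\int_{\gamma(a)}^{\gamma(b)} \lambda(u)\,du\right)$. A change of variables $u=\gamma(t)$, $du=\dot\gamma(t)\,dt$ (valid since $\gamma$ is a $C^1$ diffeomorphism with $0<\dot\gamma<\infty$) rewrites this mean as
\begin{equation*}
\int_{\gamma(a)}^{\gamma(b)} \lambda(u)\,du \;=\; \int_a^b \lambda(\gamma(t))\,\dot\gamma(t)\,dt,
\end{equation*}
which identifies the intensity of $S$ as $\lambda(\gamma(t))\dot\gamma(t)$ and yields condition (2).

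There is no substantive obstacle; the proof is essentially bookkeeping built on the change-of-variables formula and the order-preserving nature of $\gamma$. The only mild subtlety worth flagging is that one should justify working with half-open intervals and countable collections of them so that the argument extends, via the usual $\pi$-$\lambda$/monotone class reasoning, to give equality of the full finite-dimensional distributions of $S$ with those of the claimed Poisson process; but because we only need the two axioms listed in Section~\ref{sec:pp}, the interval-by-interval argument above suffices.
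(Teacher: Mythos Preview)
Your proposal is correct and follows essentially the same route as the paper: verify the independent-increments property by noting that $\gamma$ maps disjoint intervals to disjoint intervals, then identify the increment distribution via the point-count identity and a change of variables $u=\gamma(t)$. The only cosmetic difference is that the paper works with an interval $[t,t+\Delta t]$ where you use a generic $(a,b)$, and your remark about $\pi$--$\lambda$ arguments goes slightly beyond what the paper records.
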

\begin{proof}
If $R$ is a Poisson process, then the number of events of $R$ in the time interval $(a,b)$ is independent of the number of events of $R$ in the time interval $(c,d$) if $(a,b)\cap (c,d)=\emptyset$. Since $\gamma(t)$ is strictly increasing, \\
$$(a,b)\cap (c,d)=\emptyset\Leftrightarrow (\gamma(a),\gamma(b))\cap (\gamma(c),\gamma(d))=\emptyset. $$ 
Hence, the number of events in $(\gamma(a),\gamma(b))$ is also independent of the number of events in $(\gamma(c),\gamma(d))$. \\

For any $k \in \{0, 1, , \cdots\}$ and sub-interval $[t, t+\Delta t] \subset [0, 1]$, 
\begin{eqnarray*}
&  & P(k \mbox{ events of } \gamma^{-1}(R) \mbox{ are in } [t,t+\Delta t])  \\
&=& P(k \mbox{ events of } R \mbox{ are in } [\gamma(t),\gamma(t+\Delta t)])  \\
&=& Poisson \left(k;\displaystyle\overset{\gamma(t+\Delta t)}{\underset{\gamma(t)}{\int}} \lambda(v)dv\right) \\
&=& Poisson \left(k;\displaystyle\overset{t+\Delta t}{\underset{t}{\int}} \lambda(\gamma(u)) \dot \gamma(u)du\right).\\
\end{eqnarray*}
The last equality holds simply by the change of variable $v = \gamma(u)$.  Therefore, 
$$\gamma^{-1}(R) \sim PP\left(\lambda(\gamma(u))\dot{\gamma}(u)\right).$$

\end{proof}

A direct result from Lemma \ref{lem:int_warp} is that given $\gamma_i$, $S_i$ is also a Poisson process and 
$$ S_i | \gamma_i = \gamma_i^{-1}(R^i) | \gamma_i  \sim PP(\lambda(\gamma_i(t)) \dot \gamma_i(t)). $$  
%Here we simplify the notation and denote the warped intensity as $\lambda_i = (\lambda; \gamma_i) \equiv (\lambda \circ \gamma_i) \dot \gamma_i$.  
Based on the theory of Poisson processes, the intensity function $\lambda(t)$ can be decomposed into the product of the total intensity $\Lambda$ and the density function $f(t)$, 
where 
$$ \Lambda=\int_0^1 \lambda(t)dt \ \ \mbox{and}  \ \ f(t) = \lambda(t)/\Lambda.$$  
%note: mbox makes text normal in math mode
%\WW{Glenna, continue the story from here.  We need to say that: 1. $\Lambda$ is constant with respect to time warping.  This is based on the formula given above.  2. the time warping is therefore encoded in the density function, defined as $f_i = \lambda_i / \Lambda$.} 
%\WW{Can you also update the following two paragraphs accordingly? }
%\GG{I made some minor changes but not sure if it's what you're looking for here}
Therefore, the intensity estimation problem can be reduced to density estimation and scalar total intensity estimation.  

Note that for $i = 1, \cdots, n$, 
$$ \int_0^1 \lambda_i(t)dt = \int_0^1 \lambda(\gamma_i(t))\dot \gamma(t) dt = \int_0^1 \lambda(s)ds = \Lambda.$$
That is,  $\Lambda$ is constant with respect to time warping. 
Hence, the density of the events in $S_i$, given $\gamma_i$, can be written as $f_i(t) = \lambda_i(t) / \Lambda=\lambda(\gamma_i(t))\dot{\gamma}_i(t)/\Lambda$. This expression indicates that the time warping is encoded in the density function, and independent of total intensity. By the theory of Poisson processes, the number of events in each process follows a Poisson distribution with mean $\Lambda$.  For a set of given observations $\{S_i\}$, $\Lambda$ can be easily estimated using a conventional maximum likelihood estimate. Therefore, the intensity estimation problem reduces to estimating the underlying density $f$.  Given $\{S_i\}$, we propose a modified kernel method to estimate density functions $\{f_i\}$, and then use these densities to estimate $f$.  This whole procedure is described in detail in Section \ref{sec:estimation}.

\subsection{Phase Distance Between Positive Probability Density Functions}
\label{sec:distances}

In this paper, we focus on a metric-based method to estimate the underlying density $f$.  Metric distances between density functions is a classical topic and a number of measures have been proposed, for example, the Bhattacharyya Distance \cite{bhattacharyya-43}, the Hellinger Distance \cite{hellinger1909}, the Wasserstein Distance \cite{wasserstein69} and the elastic distance beween densities based upon the Fisher-Rao metric \cite{srivastava-etal-Fisher-Rao-CVPR:2007}. Suppose $f_1$ and $f_2$ are two density functions on [0,1] with cumulative distribution functions $F_1$ and $F_2$, respectively. Then, these metrics are defined as: 
\begin{itemize}
\item Wasserstein Distance: $d_{W}(f_1,f_2)=\Vert F_1^{-1}-F_2^{-1}\Vert$
\item Bhattacharyya Distance: $d_{B}(f_1,f_2)=-\log\left(\int \sqrt{f_1(t)f_2(t)}dt\right)$
\item Hellinger Distance: $d_{H}(f_1,f_2)=\frac{1}{\sqrt{2}}\Vert \sqrt{f_1}-\sqrt{f_2}\Vert$
\item Fisher-Rao Distance: $d_{FR}(f_1,f_2)=\cos^{-1}\left(\int_0^1 \sqrt{f_1(t)}\sqrt{f_2(t)}dt\right)$
\end{itemize}
Note that the Fisher-Rao metric between two density functions is similar to the Hellinger Distance (arc length vs. chord length) \cite{srivastava-etal-Fisher-Rao-CVPR:2007}.  

Based on the generative model in Eqn. \ref{eq:warping}, the difference between the true underlying density function and the noise-contaminated density is the time warping along the time axis. Such a difference is characterized as the {\it phase difference} and we expect that a metric measuring phase difference will be purely based on the warping function between two densities.  That is, the distance between $f_1$ and $f_2$ will only depend on $\gamma$ if $f_1=(f_2; \gamma)$. However, none of the above metrics purely measure this phase difference between two density functions. We aim to find a metric that can properly characterize such phase difference. In this paper, we will define a new distance between positive densities which properly measures their phase difference.  The set of all positive density functions on $[0, 1]$ is denoted as $\mathbb P$.  

We note that for any densities $f_1, f_2 \in \mathbb P$, their cumulative distribution functions $F_1, F_2$ are warping functions in $\Gamma$.  By the group structure of $\Gamma$, it is straightforward to find that the optimal warping function between $f_1$ and $f_2$ (i.e. $\gamma^* \in \Gamma$ such that $f_1=(f_2\circ\gamma^*)\dot{\gamma}^*$ or $F_1=F_2\circ \gamma^*$), is unique and has a closed-form solution given by 
\begin{equation}
\gamma^*=F_2^{-1}\circ F_1. 
\label{eq:optwarp}
\end{equation}
Based on this result, it is natural to define a distance that measures the phase difference by measuring how far the warping function is from the identity warping function, $\gamma_{id}$. In other words, the smaller the distance between the warping function and $\gamma_{id}$, the less warping that is required between the two densities. One definition of the distance metric is given as follows.

\begin{definition}
For any two functions $f_{1}, f_{2} \in \mathbb P$, we define an intrinsic distance, $d_{int}$, between them as: 
\begin{equation}
d_{int}(f_1,f_2)=\arccos \inner{1}{\sqrt{\dot{\gamma}}}\label{eq:disint}
\end{equation}
where $\gamma$ is the optimal time warping between $f_1$ and $f_2$ (i.e. $f_1=(f_2\circ\gamma)\dot{\gamma}$).
\end{definition}\label{def:distint}

This definition of phase distance has been used in the Fisher-Rao framework \cite{TuckerCSDA13}.  This distance is intrinsic which measures the arc-length between ${\sqrt{\dot{\gamma}}}$ and 1 in the unit sphere $S^\infty$ (SRVF space of $\Gamma$).  Note that the definition of phase distance in  $\mathbb P$ is not unique.  We can also define an {\em extrinsic} distance as follows:

\begin{definition}
For any two functions $f_{1}, f_{2} \in \mathbb P$, we define an extrinsic distance, $d_{ext}$, between them as: 
\begin{equation}
d_{ext}(f_1,f_2)=\Vert 1-\sqrt{\dot{\gamma}}\Vert
\label{eq:dis}
\end{equation}
where $\gamma$ is the optimal time warping between $f_1$ and $f_2$ (i.e. $f_1=(f_2\circ\gamma)\dot{\gamma}$).
\end{definition}\label{def:dist}

Notice that $d_{ext}(f_1, f_2) $ can also be written as $\Vert 1-\sqrt{F_2^{-1}\dot{\circ} F_1} \Vert=\Vert \sqrt{\dot{F}_1^{-1}}-\sqrt{\dot{F}_2^{-1}}\Vert$. 
To simplify the notation, we use $\gamma_f$ denote the cumulative distribution function of $f \in \mathbb P$.  Then 
$$ d_{ext}(f_1, f_2) = \Vert  \sqrt{\dot{\gamma}_{f_1}^{-1}}-\sqrt{\dot{\gamma}_{f_2}^{-1}} \Vert 
= \Vert  (1, {\gamma}_{f_1}^{-1}) - (1, {\gamma}_{f_2}^{-1}) \Vert 
$$
where the operator $(f,\gamma)=\left(f \circ \gamma\right)\sqrt{\dot{\gamma}}$ for $f \in \mathbb P$ and $\gamma \in \Gamma$.  

Also notice that because the optimal warping function $\gamma = F_2^{-1} \circ F_1$, the distance $d_\gamma(f_1, f_2) $ can also be written as $\Vert 1-\sqrt{F_2^{-1}\dot{\circ} F_1} \Vert=\Vert \sqrt{\dot{F}_1^{-1}}-\sqrt{\dot{F}_2^{-1}}\Vert$. The commonly-used Wasserstein distance $d_W$ is 
\begin{align*}
d_W(f_1,f_2) &=\left\Vert F_1^{-1} - {F}_2^{-1} \right\Vert 
\le 2 \left\Vert \sqrt{\dot{F}_1^{-1}}-\sqrt{\dot{F}_2^{-1}}\right\Vert 
= 2 d_\gamma(f_1,f_2)
\end{align*}
This shows that the consistency results that hold for $d_\gamma$ will also hold for $d_W$, but the reverse is not true in general. Similar to the Wasserstein and Hellinger distances, this $d_\gamma$ metric is also a proper distance.  The detailed proof in given in Appendix A. While $d_\gamma$ is not isometric like Bhattcharya and Hellinger, it is the only metric (within these four) that characterizes the phase difference between $f_1$ and $f_2$.

Either $d_{int}$ or $d_{ext}$ can be used to estimate the underlying density $f$.  In this paper, we choose to focus on the extrinsic distance $d_{ext}$ for two reasons: 1. Computational algorithms based upon the extrinsic distance are usually more efficient than those based on the intrinsic distance.  2. The extrinsic distance provides a closed-form Karcher mean representation (see definition next), which plays an essential role in developing the asymptotic theory for our estimator in Sec. \ref{sec:theory}.

\subsection{Karcher Mean}\label{sec:km}

The notion of a Karcher mean was used on the set of warping functions where an extrinsic distance between warping functions is adopted \cite{WuSrivastavaJCNS11}.  That is, assuming $\gamma_1, \cdots, \gamma_n \in \Gamma$ is a set of warping functions, their Karcher mean $\bar \gamma$ can be defined as 
$$\bar \gamma =   \argmin_{\gamma \in \Gamma} \sum_{i=1}^{n} ||\sqrt{\dot \gamma} - \sqrt{\dot \gamma_i}||^2.$$ 
It was shown in \cite{WuSrivastavaJCNS11} that this Karcher mean has a closed-form solution:
$$\sqrt{\dot {\bar \gamma}}=\dfrac{ \sum_{i=1}^n\sqrt{\dot{\gamma}_i}}{\Vert  \sum_{i=1}^n\sqrt{\dot{\gamma}_i}\Vert},$$ where $\sqrt{\dot {\bar \gamma}}$ is the SRVF of $\bar \gamma$.  \\

Similar to the Karcher mean of a set of warping functions in $\Gamma$, we can define the Karcher mean of a set of density functions in $\mathbb P$.  This definition is based on the newly-defined phase distance $d_{ext}$ in Eqn. \ref{eq:dis}.   
\begin{definition}
We define the Karcher mean $\mu_n$ of a set of functions $f_1, \cdots, f_n \in \mathbb P$ as the minimum of the sum of squares of distances in the following form:
\begin{equation}
\mu_n = \argmin_{\mu \in \mathbb P} \sum_{i=1}^{n} d_{ext}(\mu, f_{i})^{2}.
\label{eq:karcher}
\end{equation}
\end{definition}

Based on the closed-form solution for the Karcher mean of a set of warping functions, we can efficiently compute the Karcher mean in Eqn. \ref{eq:karcher} using the following algorithm. 
%Based on the metric in Eqn. \ref{eq:dis}, one can derive a closed-form solution of the Karcher mean as
%$$\bar{\psi}=\dfrac{ \sum_{i=1}^n\sqrt{\dot{\gamma}_i^{-1}}}{\Vert  \sum_{i=1}^n\sqrt{\dot{\gamma}_i^{-1}}\Vert},$$ where $\bar \psi = \sqrt{\dot \mu_n}$ is the SRVF of the Karcher mean $\mu_n$. 
% This relationship is important in calculating the Karcher mean of the set of estimated PDFs; which will also be a PDF.

\subsubsection*{Algorithm 1: Karcher Mean Computation} \label{sec:kmalg}
Given a set of density functions $f_1, \dots,f_n \in \mathbb P$, and their cumulative distribution functions $F_1, \dots,F_n$, respectively. 
\begin{enumerate}
\item Initialize $f_0=f_j$ for any $j=1,2,\dots,n$.
\item Find $\gamma_j^*=F_0^{-1}\circ F_j,\,j=1,2,\dots,n.$ \label{item:gam}
\item Compute the Karcher mean $\bar \gamma$ of $\lbrace{\gamma_i^*}^{-1}\rbrace_{j=1}^n$, with formula 
$$\sqrt{\dot {\bar \gamma}}=\dfrac{ \sum_{j=1}^n\sqrt{\dot{\gamma_j^*}^{-1}}}{\Vert  \sum_{j=1}^n\sqrt{\dot{\gamma_j^*}^{-1}}\Vert}.$$
\item $\hat f =(f_0\circ\bar{\gamma}^{-1})\dot{\bar{\gamma}}^{-1}$ is the Karcher mean of $f_1, \dots,f_n$. 
\end{enumerate} 

The algorithm for computing the Karcher mean of functions in $\mathbb P$ is illustrated with a simple example in Fig. \ref{fig: betaex}. The 10 gold lines in the figure denote the density functions of Beta distribution on the domain  $[0,1]$ in the form $f(x; \alpha, \beta) \propto x^{\alpha-1} (1-x)^{\beta-1}$.  Here the parameters $\alpha$ takes value 1, 1, 1.5, 2, 2, 2.5, 3, 3, 4, 5, and $\beta$ takes value 4, 3, 3, 2.5, 2, 2, 1.5, 1, 1, 2 for the 10 functions, respectively.  The Karcher mean of these functions was computed using Algorithm 1 and the result is shown as the thick red line in Fig. \ref{fig: betaex}.

\begin{figure}[ht]
\begin{center}
\includegraphics[width=3.0in]{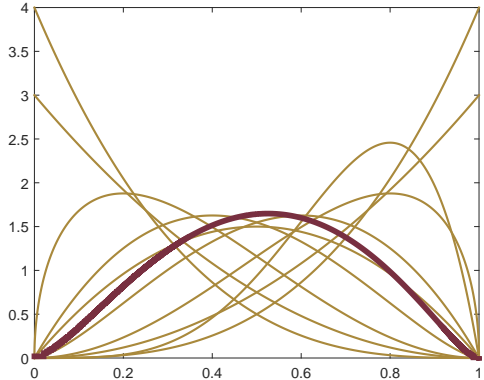}
\caption{Karcher mean of 10 Beta density functions}\label{fig: betaex}
\end{center}
\end{figure}

\subsection{Intensity Estimation Method}
\label{sec:estimation}
Since the Karcher mean of a set of density functions (computed under $d_\gamma$) is itself a density function, we use the Karcher mean as an estimate of the underlying density of the process. In our proposed estimation method, the Karcher mean, as computed using Algorithm 1, is used in conjunction with the MLE of the total intensity of the process to produce an estimate of the intensity function. Note that the computation for Karcher mean using Algorithm 1 is based on the assumption that each warped density $f_i(t)$ is already known, but practical data are only Poisson process realizations.  In this section, we propose a kernel estimation procedure to estimate $f_i(t)$. 

\subsubsection{Modified Kernel Density Estimation}\label{sec:mkde} 
Kernel density estimation has been well studied in statistics literature and it is well known that the standard kernel density estimator has good asymptotic properties when the domain is the real line. However, when the domain is a compact set such as [0, 1] in this paper, the standard kernel density estimator cannot be directly used. We adopt here a reflection-based method to address this issue \cite{clinehart91, schuster85, silverman86}.

Suppose $x_1,\dots , x_m$ are observations in $[0,1]$ whose density is given by $f \in \mathbb P$.
The standard kernel density estimator is given by $\tilde{f}(t)=\frac{1}{mh}\sumton{j=1}{m}K\left(\frac{t-x_j}{h}\right)$, where $K(\cdot)$ is a kernel function and $h$ denotes the kernel width.  Note that this estimated density is defined on
the real line $(-\infty, \infty)$, and the section within $[0, 1]$ in general is not a density function itself.  To simplify the estimation procedure, we can choose kernel functions with compact support within $[-1, 1]$. That is, $K(t) = 0$ for $|t| > 1$.

%A characteristic of reflection-based kernel density estimators is that they consider the reflected points around the boundary. For example, they  use the set $\{X_1,\dots,X_n,-X_1,\dots,-X_n\}$ instead of the original data. However, the estimated density may have regions with zero probability which violates our assumption of a strictly positive density function. To account for this, 

Here we propose a two-step modification of the estimate $\tilde f(t)$.  At first, we wrap around $\tilde f$ within the domain $[0, 1]$, and denote the new function as $\tilde {\tilde f}$, a density function on $[0, 1]$.  Secondly, we add a small positive constant to $\tilde {\tilde f}$, and then normalize the sum to be a density function.  This step is to assure that the normalized function is positive on $[0, 1]$, a necessary condition for the existence of the warping functions used in the distance $d_{ext}$.  The modified kernel density estimation can be summarized in the following algorithm. 

\subsubsection*{Algorithm 2: Modified Kernel Estimation} \label{sec:fialg}
Suppose $x_1,\dots , x_m$ are observations in $[0,1]$ whose density is given by $f$. 

\begin{enumerate}
\item Calculate the standard kernel-based estimate, $\tilde{f}(t)=\dfrac{1}{mh}\sumton{j=1}{m}K\left(\dfrac{t-x_j}{h}\right)$, $t\in \real$, using an appropriate bandwidth, $h$, and a kernel function $K$ with compact support (e.g. a Beta density function). 
\item Update the estimate by $\tilde{\tilde{f}}(t)=\tilde{f}(t)+\tilde{f}(-t)+\tilde{f}(2-t)$. The updated estimate is defined only for $t\in [0,1]$. 
\item $\hat{f}(t)= \tilde{\tilde{f}}(t)\dfrac{m}{m+1}+\dfrac{1}{m+1}, t \in [0, 1]$ is the modified estimate.
\end{enumerate}

\subsubsection{Estimation Algorithm} 
Estimation of the intensity of the process occurs in two independent components. First, the total intensity $\Lambda$ can be easily computed with a standard MLE procedure. Second, the Karcher mean of the estimated densities is used to estimate $f(t)$. This estimation algorithm is given as follows.

\subsubsection*{Algorithm 3: Intensity Estimation Algorithm } 
Given a set of observed processes $S_i$ with number of events being $k_i, i=1,\dots,n$, 

\begin{enumerate}
\item Estimate $\Lambda$ by its MLE: $\hat{\Lambda}=\frac{1}{n}\sumton{i=1}{n}{k_i}$.
\item Use Algorithm 2 to estimate the density of each observed process, $\hat f_i(t), i = 1, \cdots, n$.   
\item Estimate the intensity function of each process by the formula $\hat \lambda_i(t) = \hat \Lambda \hat f_i(t), i = 1, \cdots, n$.  
\item Use Algorithm 1 to estimate the overall underlying density, $\hat f(t)$, as the Karcher mean of $\{\hat{f_i}\}$.
\item Estimate the underlying intensity $\lambda(t)$ in the original process using:
\begin{center}
$\hat{\lambda}(t)=\hat{\Lambda} \hat{f}(t)$.
\end{center}
\end{enumerate}

\section{Asymptotic Theory on Consistency}
\label{sec:theory}
Asymptotical properties of estimators are often of interest since these properties can give reasonable certainty that the ground-truth parameters are appropriately estimated by the given algorithms. In this section, we provide asymptotic theory on the density estimator $\hat f$ in Algorithm 3.  Our estimation is based on the model 
$$\lambda_i = (\lambda \circ \gamma_i) \dot \gamma_i,  \ \  i = 1, \cdots, n,$$ 
where $\lambda$ is the underlying intensity function and $\gamma_i \in\Gamma,\, i=1,\dots,n,$ are a set of warping functions.  By Lemma \ref{lem:int_warp}, each observation $S_i$ is a Poisson process realization with intensity $\lambda_i$.  Given $\{S_i\}$, Algorithm 3 provides an estimation procedure for $\lambda$.  As the total intensity $\Lambda$ is independent of time warpings,  our asymptotical theory will focus on the normalized intensity, i.e. intensity function $f = \lambda/\Lambda$.  We mathematically prove that the proposed algorithm provides a consistent estimator for $f$.  The asymptotic theory is based on sample size $n$ as well as the total intensity $\Lambda$.  Here we only provide result on the main theorem.  All lemmas that lead to the theorem can be found in Appendix B. 

Before we state the main theorem, we list all assumptions as follows:
\begin{enumerate}
\item The observations are a sequence of Possion process realizations $\{S_i\}$, and $S_i$ follows intensity function $\lambda_i = (\lambda \circ \gamma_i) \dot \gamma_i$.   $\Lambda = \int_0^1 \lambda(t)dt$ is the total intensity.  $f = \lambda/\Lambda$ and $f_i = \lambda_i/\Lambda = (f; \gamma_i)$.
\item The density function $f$ is continuous on [0,1].  Also, there exist $m_f, M_f > 0$ such that $f(t) \in [m_f, M_f]$, for any $t \in [0,1]$.   
\item $\gamma_i(t), t \in [0, 1], i = 1, \cdots, n$ are a set of independent warping functions.  The SRVFs of their inverses $\sqrt {\dot \gamma_i^{-1}(t)}$ distribute around $\sqrt {\dot \gamma_{id}} = 1$ on the Hilbert unit sphere $H^\infty$. In particular, $E(\sqrt {\dot \gamma_i(t)}) \equiv \beta > 0$ and there exist $m_\gamma, M_\gamma > 0$ such that
$\sqrt {\dot \gamma_i^{-1}(t)} \in [m_\gamma, M_\gamma]$, for any  $t \in [0, 1]$. It is important to note that $\sqrt{\dot{\gamma}_i^{-1}(t)}$ is a point on the Hilbert unit sphere. As a result, it is easy to show that assuming $\mathbb{E}\left(\sqrt{\dot{\gamma}_i^{-1}(t)}\right)=\beta>0$ is equivalent to assuming that the extrinsic mean of $\{\sqrt{\dot{\gamma}_i^{-1}(t)}\}$ is 1.
\item The total intensity $\Lambda$ can vary in the form of a sequence $\{\Lambda_m\}_{m = 1}^\infty$. We assume the sequence goes to $\infty$ with $\Lambda_m \ge \alpha \log(m), \alpha > 1$ for  sufficiently large $m$.
\item The bandwidth of the kernel density estimator in Algorithm 2 is chosen optimally. That is, for a sequence of $r$ events, the bandwidth $h_r$ satisfies $h_r \rightarrow 0$ and $r h_r \rightarrow \infty$ when $r \rightarrow \infty$.
\end{enumerate}

\begin{theorem}
Given the four conditions listed above, let $\hat f$ be the density function estimated with Algorithm 3.  Then we have 
$$ \lim_{n \rightarrow \infty} \lim_{m \rightarrow \infty} d_{ext}(\hat f, f) = 0 \ \ \ a.s. $$ 
\end{theorem}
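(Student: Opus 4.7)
The plan is to decompose the double limit using the triangle inequality
\[
d_{ext}(\hat f, f) \le d_{ext}(\hat f, \mu_n) + d_{ext}(\mu_n, f),
\]
where $\mu_n$ denotes the Karcher mean (under $d_{ext}$) of the \emph{true} warped densities $f_i = (f;\gamma_i)$, $i = 1,\dots,n$. The first term captures the kernel estimation error from Algorithm~2 and will vanish as $m \to \infty$ for any fixed $n$; the second term captures the sampling error of the Karcher mean and will vanish as $n \to \infty$. Since the two error sources are essentially independent, the iterated limit is the natural order.

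For the inner limit ($m \to \infty$, $n$ fixed), I would first show that each $k_i \sim Poisson(\Lambda_m)$ tends to $\infty$ almost surely. Because $\Lambda_m \ge \alpha \log m$ with $\alpha > 1$, Borel--Cantelli applied to the Poisson lower tail $P(k_i \le c) \le e^{-\Lambda_m}\sum_{j=0}^c \Lambda_m^j/j!$ yields $k_i \to \infty$ a.s. Conditional on $k_i$, the points of $S_i$ are i.i.d. draws from $f_i$, so classical Parzen--Rosenblatt consistency under the bandwidth condition (assumption 5) gives $\tilde f_i \to f_i$ a.s. The reflection step and the $1/(k_i+1)$ positivity shift in Algorithm~2 are continuous perturbations that vanish in the limit, so $\hat f_i \to f_i$ a.s. in a sufficiently strong mode. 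Since $f_i \ge m_f m_\gamma^2 > 0$ uniformly by assumptions 2--3, $F_i^{-1}$ is Lipschitz-stable and $d_{ext}(\hat f_i, f_i) \to 0$ a.s. Because the Karcher mean in Algorithm~1 is given in closed form by a spherical projection, it is continuous in its inputs, and therefore $\hat f \to \mu_n$ a.s.

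For the outer limit ($n \to \infty$), using the representation $f_i = (f;\gamma_i)$ and hence $F_i^{-1} = \gamma_i^{-1}\circ F^{-1}$, the closed-form Karcher mean computation in Algorithm~1 implies that the SRVF of the relevant mean inverse warping is
\[
\sqrt{\dot{\bar\gamma}_n^{-1}} = \frac{\frac{1}{n}\sum_{i=1}^n \sqrt{\dot\gamma_i^{-1}}}{\left\|\frac{1}{n}\sum_{i=1}^n \sqrt{\dot\gamma_i^{-1}}\right\|}.
\]
The random functions $\sqrt{\dot\gamma_i^{-1}}$ are i.i.d. elements of $\mathbb{L}^2[0,1]$, uniformly bounded by $M_\gamma$, with common mean $\beta\cdot 1$ (assumption 3). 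Mourier's strong law of large numbers in Hilbert space then gives $\frac{1}{n}\sum \sqrt{\dot\gamma_i^{-1}} \to \beta\cdot 1$ a.s. in $\mathbb{L}^2$, and normalization yields $\sqrt{\dot{\bar\gamma}_n^{-1}} \to 1$. By the definition of $d_{ext}$, this is precisely $d_{ext}(\mu_n, f) \to 0$ a.s.

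The main obstacle is the continuity of the Karcher-mean operator under $d_{ext}$ in the inner limit: small perturbations of the densities $\hat f_i$ must produce small $\mathbb{L}^2$ perturbations in the SRVFs $\sqrt{\dot\gamma_i^{*-1}}$ of the optimal warpings $\gamma_i^* = F_0^{-1}\circ F_i$, which requires inverting noisy CDFs. This is exactly where the uniform lower bound on $f_i$ coming from assumptions 2--3 is indispensable, since it keeps CDF inversion Lipschitz-stable. I would formalize this stability using the auxiliary lemmas in Appendix~B, then combine it with the spherical-projection continuity and close the argument via the triangle inequality above.
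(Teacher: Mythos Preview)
Your proposal is correct and follows essentially the same architecture as the paper's proof: the same triangle-inequality split into a kernel-estimation error term (handled in the inner limit $m\to\infty$ via Borel--Cantelli on the Poisson tail, $L^1$ kernel consistency lifted to SRVF level, and continuity of the closed-form spherical projection) and a Karcher-mean sampling error term (handled in the outer limit $n\to\infty$). The only noteworthy variation is that you invoke Mourier's Hilbert-space SLLN for $\tfrac{1}{n}\sum_i \sqrt{\dot\gamma_i^{-1}} \to \beta\cdot 1$, whereas the paper establishes this directly via a fourth-moment Chebyshev/Borel--Cantelli argument; both are valid and lead to the same conclusion.
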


\begin{proof} By the basic property of a Poisson process, the event times in the observation $S_i$ are an i.i.d. sequence with density function $f_i = (f; \gamma_i), i = 1, \cdots, n$.   
Denote $\hat f_i$ as the estimated density function by the modified kernel estimation method.  Then $\hat f_i(t) > 0$ for any $t \in [0, 1]$.  Based on the group structure of $\Gamma$, there exists a unique $\hat \gamma_i \in \Gamma$ such that $\hat f_i = (f; \hat \gamma_i)$.  

Here we compute the Karcher mean of $\{\hat f_i\}$.  For any density function $g$, we have  
\begin{eqnarray*}
\sum_{i=1}^{n}{d_{ext}^2(\hat f_i,g)} & = & \sum_{i=1}^{n}{d_{ext}^2((f;\hat \gamma_i),g)} 
 = \sumton{i=1}{n}{\left\Vert \left(1, \gamma_{(f; \hat \gamma_i)}^{-1} \right) - \left(1, \gamma_{g}^{-1}\right) \right\Vert^2} \\
& = & \sumton{i=1}{n}{\left\Vert \left(1, \hat \gamma_i^{-1} \circ \gamma_{f}^{-1} \right) - \left(1, \gamma_{g}^{-1}\right) \right\Vert^2} \\
&=&  \sumton{i=1}{n}{\left\Vert \left(1, \hat \gamma_i^{-1}  \right) - \left(1, \gamma_{g}^{-1} \circ \gamma_{f} \right) \right\Vert^2} 
 \end{eqnarray*}
Denote the Karcher mean of $\{\hat \gamma_i^{-1}\}$ as $\hat {\bar{\gamma}}$.  Then the above sum of squares is minimized when 
$\gamma_{g}^{-1} \circ \gamma_{f} = \hat {\bar \gamma}$.  That is, 
$\gamma_{\hat f}^{-1} = \hat {\bar \gamma} \circ \gamma_f^{-1}.$
By isometry on time warping functions and the triangular inequality,
\begin{eqnarray*}
d_{ext}(\hat f, f) & = & {\left\Vert \left(1, \gamma_{\hat f}^{-1} \right) - \left(1, \gamma_{f}^{-1}\right) \right\Vert} 
                    = {\left\Vert \left(1, \hat {\bar \gamma} \circ \gamma_f^{-1} \right) - \left(1, \gamma_{f}^{-1}\right) \right\Vert} \\
                & = &  {\left\Vert \left(1, \hat {\bar \gamma} \right) -1 \right\Vert} 
 		 \le  {\left\Vert \left(1, \hat {\bar \gamma} \right) - \left(1, \bar \gamma \right)\right \Vert}  + {\left\Vert \left(1, \bar \gamma \right) -1 \right \Vert}                
 \end{eqnarray*}
By Lemma \ref{lem:karcher}, we have shown that  ${\left\Vert \left(1, \bar \gamma \right) -1 \right\Vert}  \xrightarrow{a.s.} 0$ when $n \rightarrow \infty$.  Note that  ${\Vert \left(1, \hat {\bar \gamma} \right) - \left(1, \bar \gamma \right) \Vert}$ depends on the total intensity $\Lambda_m$ and sample size $n$.  We will show that this term also converges to 0 when $m$ is large (for any fixed $n$).  

To simplify the notation, we denote $a_i = \sqrt{\dot \gamma_i^{-1}}$, $\hat a_i = \sqrt{\dot {\hat\gamma}_i^{-1}}, i = 1, \cdots, n.$   Let the number of events in $S_i$ be $n_i$.  Then $n_i$ is a random variable following Poisson distribution with mean $\Lambda_m$. By Lemma  \ref{lem:poisson}, $n_i \xrightarrow{a.s.} \infty$ when $m \rightarrow \infty$.  Using Lemma \ref{lem:inverse}, 
\begin{eqnarray*}
\left\Vert\hat a_i - a_i\right \Vert &=& \left\Vert (1, \hat \gamma_i^{-1}) - (1, \gamma_i^{-1})\right \Vert =  
 \left\Vert (1, \hat \gamma_i^{-1} \circ \gamma_f^{-1}) - (1, \gamma_i^{-1} \circ \gamma_f^{-1}) \right \Vert \\%  \mbox{ (by isometry)}\\
&=&  \left\Vert (1, \gamma_{(f; \hat \gamma_i)}^{-1}  ) - (1, \gamma_{(f; \gamma_i)}^{-1} ) \right \Vert
=  \left\Vert (1, \gamma_{\hat f_i}^{-1}  ) - (1, \gamma_{f_i}^{-1} ) \right \Vert
 \xrightarrow{a.s.} 0
 \end{eqnarray*}
  when $n_i \rightarrow \infty$.  Therefore,  $||\hat a_i - a_i|| \xrightarrow{a.s.} 0, \ i = 1, \cdots, n,$ when $m \rightarrow \infty$.

Let $\bar {\hat a} = \frac{1}{n}\sum_{i=1}^n \hat a_i$ and $\bar a = \frac{1}{n}\sum_{i=1}^n a_i$.  Then, $||\bar {\hat a} - \bar a|| \xrightarrow{a.s} 0$ when $m \rightarrow \infty$.  Hence, 
\begin{eqnarray*}
{\left\Vert \left(1, \hat {\bar \gamma} \right) - \left(1, \bar \gamma \right) \right \Vert}
&=& \left\Vert \frac{\bar {\hat a}}{\left\Vert\bar {\hat a}\right \Vert} -  \frac{\bar { a}}{\left\Vert\bar {a}\right \Vert} \right \Vert 
\le \left\Vert \frac{\bar {\hat a}}{\left\Vert\bar {\hat a}\right \Vert} -  \frac{\bar {\hat a}}{\left\Vert\bar { a}\right \Vert} \right \Vert + \left\Vert \frac{\bar {\hat a}}{\left\Vert\bar { a}\right \Vert} -  \frac{\bar { a}}{\left\Vert\bar {a}\right \Vert} \right \Vert \\
&\le& 2 \left\Vert \bar {\hat a} - \bar {a} \right \Vert / \left\Vert\bar {a}\right \Vert  \xrightarrow{a.s.} 0 \ \ (\mbox{when } m \rightarrow \infty)
\end{eqnarray*}
Note that the convergence of ${\Vert \left(1, \hat {\bar \gamma} \right) - \left(1, \bar \gamma \right) \Vert}$ is for any sample size $n$.  Finally, we have proved that 
 $$ \lim_{n \rightarrow \infty} \lim_{m \rightarrow \infty} d_{ext}(\hat f, f) = 0 \ \ \ a.s. $$ 
\end{proof}

\section{Extension to Nonnegative Intensity Functions}
\label{sec:nonnegdf}
The method developed thus far applies only to strictly positive density functions. In practice, this may be a quite restrictive condition and it is desired to extend the method to non-negative density functions.  Our estimation is still based on the model 
$$\lambda_i = (\lambda \circ \gamma_i) \dot \gamma_i,  \ \  i = 1, \cdots, n,$$ 
where $\lambda \ge 0$ is the underlying intensity function and $\gamma_i \in\Gamma,\, i=1,\dots,n,$ are a set of warping functions. In this section, we propose to extend Algorithm 3 to estimate this nonnegative $\lambda$ with Poisson process observations. 

\subsection{Representation of Nonnegative Intensities} 
For estimation, our focus is still on the density function $f = \lambda/\Lambda$ as the total intensity $\Lambda$ is independent of the time warping.  Let $F$ denote the CDF of $f$.  Then $F(0) = 0, F(1)= 1.$ However, as $f$ is nonnegative, $F$ may not be strictly increasing on the domain $[0,1]$.   To simplify the representation, we assume that $F$ is strictly increasing except being constant on a finite number, $K$, of non-overlapping intervals (This finiteness assumption would be 
sufficient for nonnegative intensities in practical use).  Let $\mathcal F = \{F \circ \gamma | \gamma \in \Gamma\}$ denote the set of CDFs which are warped versions of $F$, and $F_i$ be the CDF of $f_i = \lambda_i/\Lambda$.  Then $F_i = F \circ \gamma_i \in \mathcal F$ will also be constant on corresponding intervals.  

In general, let $h, g$ be two density functions whose CDFs $H, G$ are in $\mathcal F$.  Then $H$ and $G$ are strictly increasing except being constant on $K$ non-overlapping intervals.  We define 
 $\Gamma_{h, g}=\{\gamma\in \Gamma | h = ( g\circ\gamma)\dot{\gamma}\}=\{\gamma\in \Gamma|H=G\circ\gamma\}$. 
 By construction, $\Gamma_{h,g} \neq \emptyset$.  We denote the $K$ constant intervals for $H$ and $G$ are 
 $[a_1,\,b_1],\cdots,[a_K,\,b_K]$ and $[c_1,\,d_1],\cdots,[c_K,\,d_K]$, respectively.  
For any $\gamma\in \Gamma_{h,\,g}$, we must have $\gamma(a_k)=c_k$ and $\gamma(b_k)=d_k$ for $k=1,\cdots,K$.  To include the boundary points, we denote $b_0=d_0=0$ and $a_{K+1}=c_{K+1}=1$. It is our goal  to characterize all warping functions in $\Gamma_{h,\,g}$.

Note that the function $G$ is strictly increasing on each interval $[d_k,\,c_{k+1}], k = 0, 1, \cdots, K$.  Now
we define a mapping $G_k:[d_k,\,c_{k+1}]\rightarrow\real$ as follows,
$$ G_k(s) = G(s), s \in [d_k, c_{k+1}]. $$
It is apparent that $G_k$ is strictly increasing on its domain $[d_k, c_{k+1}], k = 0, 1, \cdots, K$.  For any $\gamma \in \Gamma_{h,g}$ and $t \in [b_k, a_{k+1}]$,  $\gamma(t)$ is in $[d_k, c_{k+1}]$.  Hence, $H(t) = G(\gamma(t)) = G_k(\gamma(t))$, and $\gamma(t) = G_k^{-1} \circ H(t)$. 

We then focus on the regions $[c_k,\,d_k],\,k=1,\cdots,K$ where  $G$ is constant (note: $G^{-1}$ does not exist). 
Note that $H(a_k) = G(\gamma(a_k)) = G(c_k) = G(d_k) = G(\gamma(b_k)) = H(b_k)$.  Hence, any $\gamma \in \Gamma$ with $\gamma(a_k) = c_k, \gamma(b_k) = d_k$ satisfies that $H(t) = G(\gamma(t))$ for any $t \in [a_k, b_k]$.   Finally, we have shown that the set $\Gamma_{h,g}$ can be characterized as follows,
\begin{eqnarray*}
& & \Gamma_{h,g} = \{\gamma \in \Gamma | \gamma(t) = G_k^{-1} \circ H(t), t \in [b_k, a_{k+1}], k = 0, \cdots, K, \\
 & & \hspace{3cm} \gamma(a_k) = c_k, \gamma(b_k) = d_k, k = 1, \cdots, K\}.
\end{eqnarray*}

\subsection{Estimation of Nonnegative Intensities} 

In Sec. \ref{sec:method}, we defined a phase distance $d_{ext}$ between two positive density functions.  Here we generalize the distance to nonnegative densities. 

\begin{definition}
Let $h, g$ be two density functions whose CDFs $H, G$ are in $\mathcal F$.
We define the distance between $h$ and $g$ as 
\begin{equation}
D(h,\,g)= \underset{\gamma\in\Gamma_{h,g}}{\inf}\Vert 1-\sqrt{\dot{\gamma}}\Vert
\label{eq:D}
\end{equation}
\end{definition}

We present three properties of this distance below. 
\begin {enumerate}
\item $D$ is a generalization of the distance $d_{ext}$ -- for strictly positive densities $h, g$, the set $\Gamma_{h, g}$ has single element $G^{-1}\circ H$, and therefore  $D(g, h) = d_{ext}(g, h)$. 
\item $D$ is a proper distance. The proof of this property is similar to that for the distance $d_{ext}$ (see Appendix A) and is, therefore, omitted here.  
\item Denote the constant intervals for $H$ and $G$ as 
 $[a_1,\,b_1],\cdots,[a_K,\,b_K]$ and $[c_1,\,d_1],\cdots,[c_K,\,d_K]$, respectively. Then the infimum of $\Vert 1-\sqrt{\dot{\gamma}}\Vert$ over $\Gamma_{h,g}$ can be  \textit{uniquely} reached.  Specifically, let 
 \[\gamma^*(t) =  \underset{\gamma\in\Gamma_{h,g}}{\arginf} \Vert 1-\sqrt{\dot{\gamma}}\Vert \] 
Then, 
\begin{equation}
\gamma^*(t) = \left\{
  \begin{array}{lr}
    G_k^{-1}\circ H(t) &  t\in [b_k,\,a_{k+1}],\,k=0,1,\cdots,K\\
    \left(\dfrac{d_k-c_k}{b_k-a_k}\right)(t-a_k)+c_k &  t\in [a_k,\,b_k], \, k=1,\cdots,K
  \end{array}  
\right.
\label{eq:opt}
\end{equation}
The proof of this property is based on the following fact (shown in \cite{spikemeanwu}): Assume $\gamma$ is a mapping in $\Gamma_0 = \{\gamma: [a,b] \rightarrow [c,d] | \gamma(a) = c, \gamma(b) = d, \dot \gamma(t) > 0, t \in [a,b]\}$. Then,  
the distance $\Vert 1-\sqrt{\dot{\gamma}}\Vert$ is minimized over $\Gamma_0$ when $\gamma$ is a linear function from $[a,b]$ to $[c,d]$. 
\end {enumerate}

{\bf Estimation Method:}  The estimation of nonnegative intensities follows the same procedure as in the Intensity Estimation Algorithm (Algorithm 3), where Algorithm 1 calls for the Karcher mean computation.  However, in this case we need to update the second step of Algorithm 1 (computation of optimal warping between $F_0$ and $F_j$), the new optimal form in Eqn. \ref{eq:opt} is adopted.  Analogous to the proof in Sec. \ref{sec:theory}, one can demonstrate that the estimated nonnegative intensity is also an consistent estimator (under the metric $D$ in Eqn. \ref{eq:D}).  We omit the details in this manuscript to avoid repetition. 
 
\section{Experimental Results}\label{sec:results}
In this section we will demonstrate the proposed intensity estimation using two simulations -- one is for a strictly positive intensity, and the other is for an intensity with zero-valued sub-regions.  We will also apply the new method in a real spike train dataset and evaluate the classification performance using the estimated intensities.  

\subsection{Simulations for Illustration}
\subsubsection{Poisson Process with a Positive Intensity Function}
Twenty independent realizations of a non-homogeneous Poisson process were simulated with the intensity function $\lambda(t)=100(3 + 2\sin((8t-1/2)\pi))$ on [0, 1].  This intensity function and these 20 original processes are shown in Fig. \ref{fig:intest}A. Because of the non-constant intensity, there is a higher concentration of events during intervals with high intensity and fewer events during intervals with low intensity. This pattern is easily seen in the simulated processes. 

\begin{figure}[ht]
\begin{center}
\begin{tabular}{ccc}
 \hspace{-24pt} \textbf{A} &  \hspace{-24pt} \textbf{B} & \hspace{-24pt}  \textbf{C}\\
\hspace{-24pt}
\includegraphics[height=1.4in]{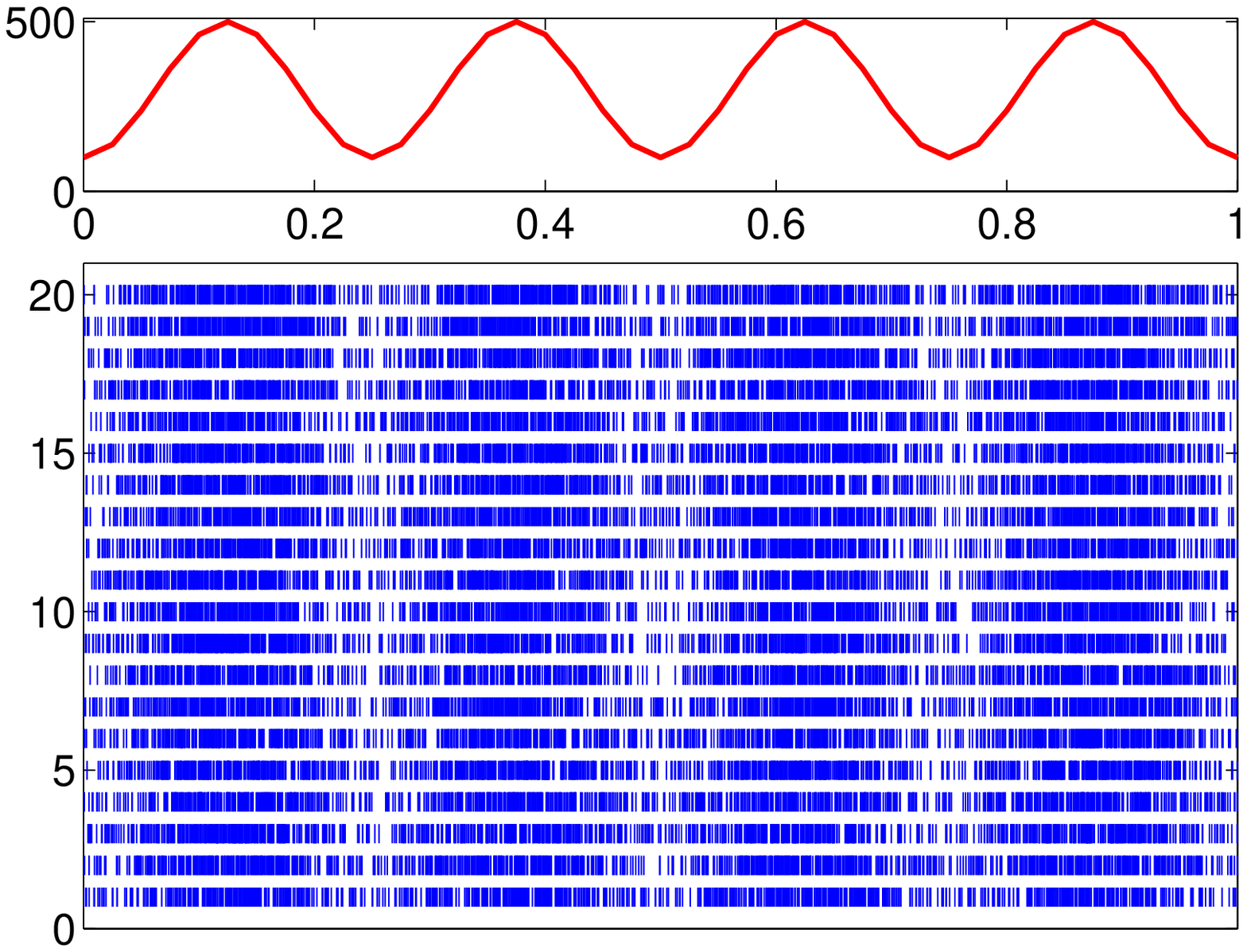}&   \hspace{-24pt}
\includegraphics[height=1.4in]{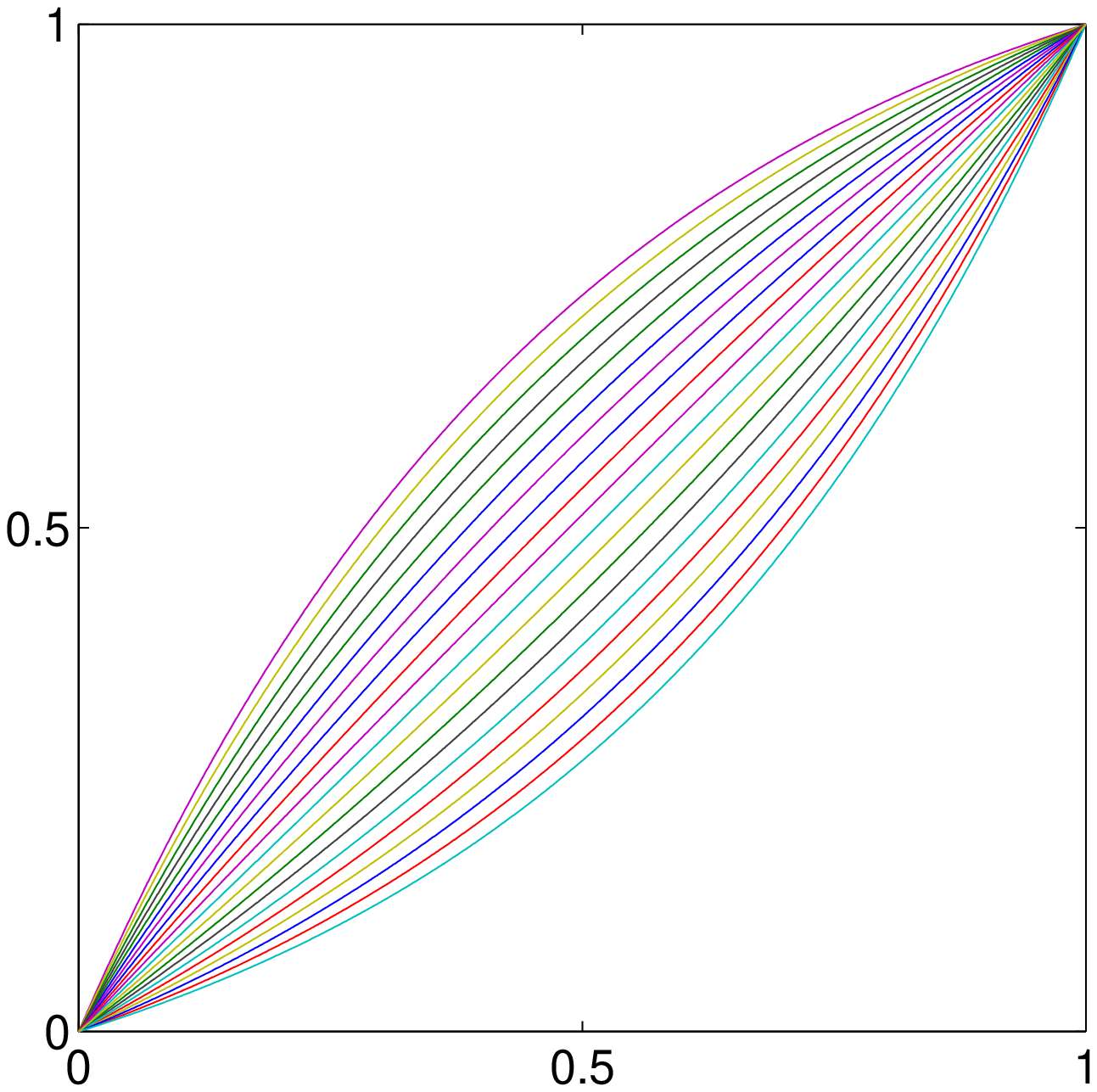}&  \hspace{-24pt}
\includegraphics[height=1.4in]{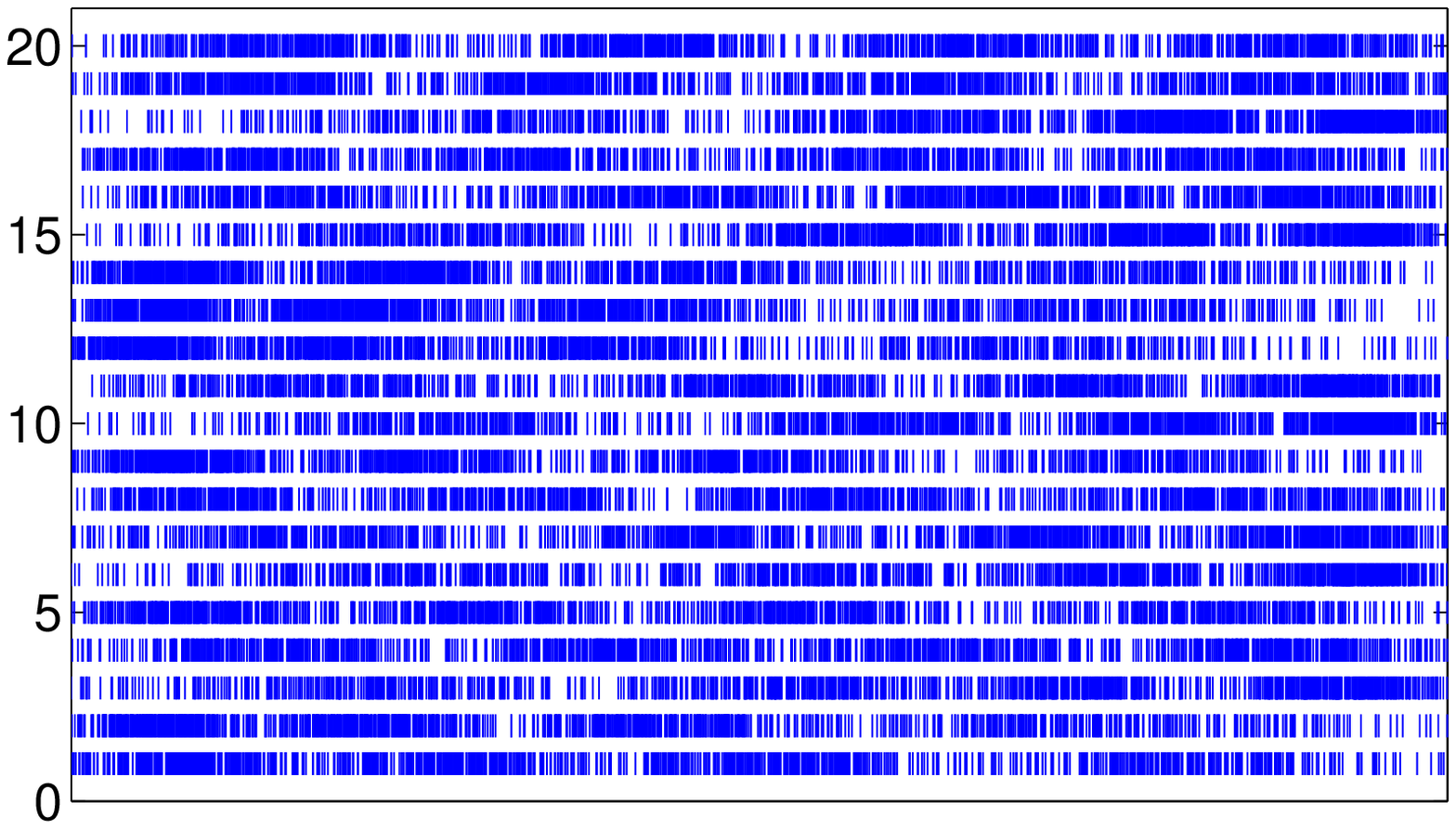}
\end{tabular}
\caption{Simulation of Poisson process with compositional noise. \textbf{A}. Intensity function of a Poisson process (top panel) and 20 independent realizations (bottom panel).  \textbf{B}. 20 time warping functions. \textbf{C}. 20 observed processes, which are warped version of the original 20 Poisson process realizations.} \label{fig:intest}
\end{center}
\end{figure}

We then generate 20 warping functions $\{\gamma_i\}_{i = 1}^{20}$ in the following form: 
 $\gamma_i(t) = {e^{a_i t} -1 \over e^{a_i} - 1}$. 
Here $a_i$ are equally spaced between $-2$ and $2$, $i = 1, \cdots, 20$.   These warping functions are shown in Fig. 
 \ref{fig:intest}B.  We then warp the 20 independent Poisson process using these 20 warping functions, respectively, by 
the formula in Eqn. \ref{eq:warping}.  The resulting warped processes are shown in Fig. \ref{fig:intest}C. 
Comparing these processes with those in  Fig. \ref{fig:intest}A, we can see that the clear link between number of events in each sub-region and the intensity value no longer exists.  Given these noisy Poisson process observations, we aim to reconstruct the underlying intensity function $\lambda(t)$.  

% \begin{figure}[ht]
%\begin{center}
%\begin{tabular}{cc}
%\textbf{A} & \textbf{B} \\
% \includegraphics[height=1.7in]{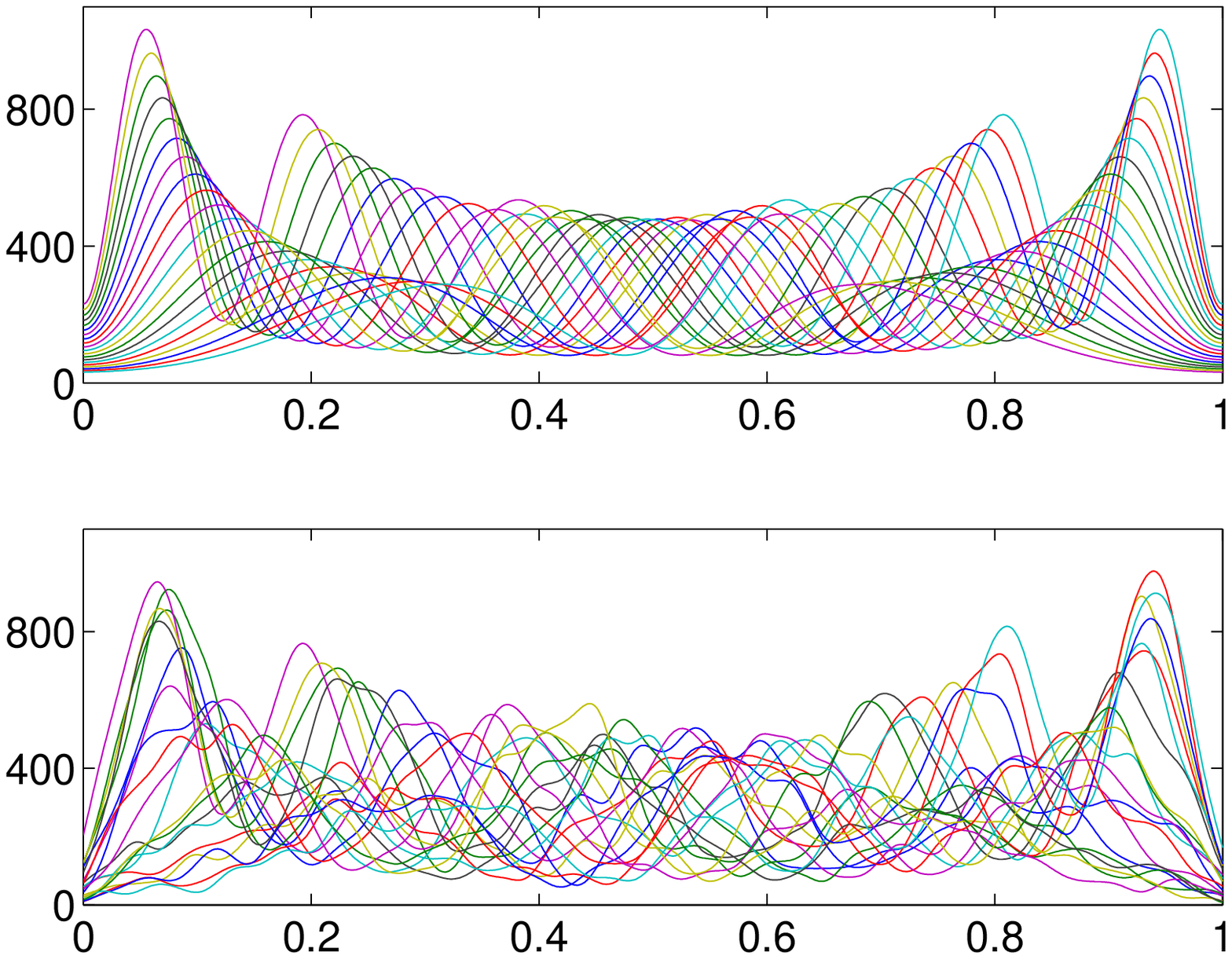}&
% \includegraphics[height=1.7in]{}
%\end{tabular}
%\caption{Positive Intensity Estimation. \textbf{A.} Top panel: true warped intensity functions for the 20 observed processes. Bottom panel: estimated warped intensity functions using Algorithm 3. \textbf{B.}  True intensity function (red) and estimates using the proposed method (blue), using naive cross-sectional method (green), and using classical elastic registration method (cyan).} 
%\label{fig:simintenest}
%\end{center}
%\end{figure}

The individual estimated density functions for the warped processes are shown in the top panel of Figure \ref{fig:simintenest}A. The true warped density functions are shown in the bottom panel of Figure \ref{fig:simintenest}A. The underlying intensity function was estimated for two different cases. In the first case, time warping is present and ignored during estimation. In the second case, time warping is present and accounted for in the estimation using the proposed method. Both of these estimates are displayed with the true intensity function for comparison in Figure \ref{fig:simintenest}B. When time warping is present and ignored, the estimated intensity function underestimates the true intensity in the middle two-thirds of the curve and the true pattern is not revealed. However, when the warping is accounted for during the estimation process, the estimate is a much better estimate of the true intensity function. When the warping functions are more severe (shown in Figure \ref{fig:simintenest}C; $a_i\in [-4,4]$), the performance decreases in all methods (Figure \ref{fig:simintenest}D). The $\lone$-, $\ltwo$-, and $\mathbb{L}^\infty$- norms were all used to measure the error in estimating the true intensity for each method (Table \ref{table:kmerrors}). However, the proposed method consistently has the lowest error regardless of which norm is used to measure the error. 

\begin{figure}[ht!]
\begin{center}
\begin{tabular}{cc}
\textbf{A.} & \textbf{B.} \\
 \includegraphics[height=2.0in]{figures/intensityftns.eps}&
 \includegraphics[height=2.0in]{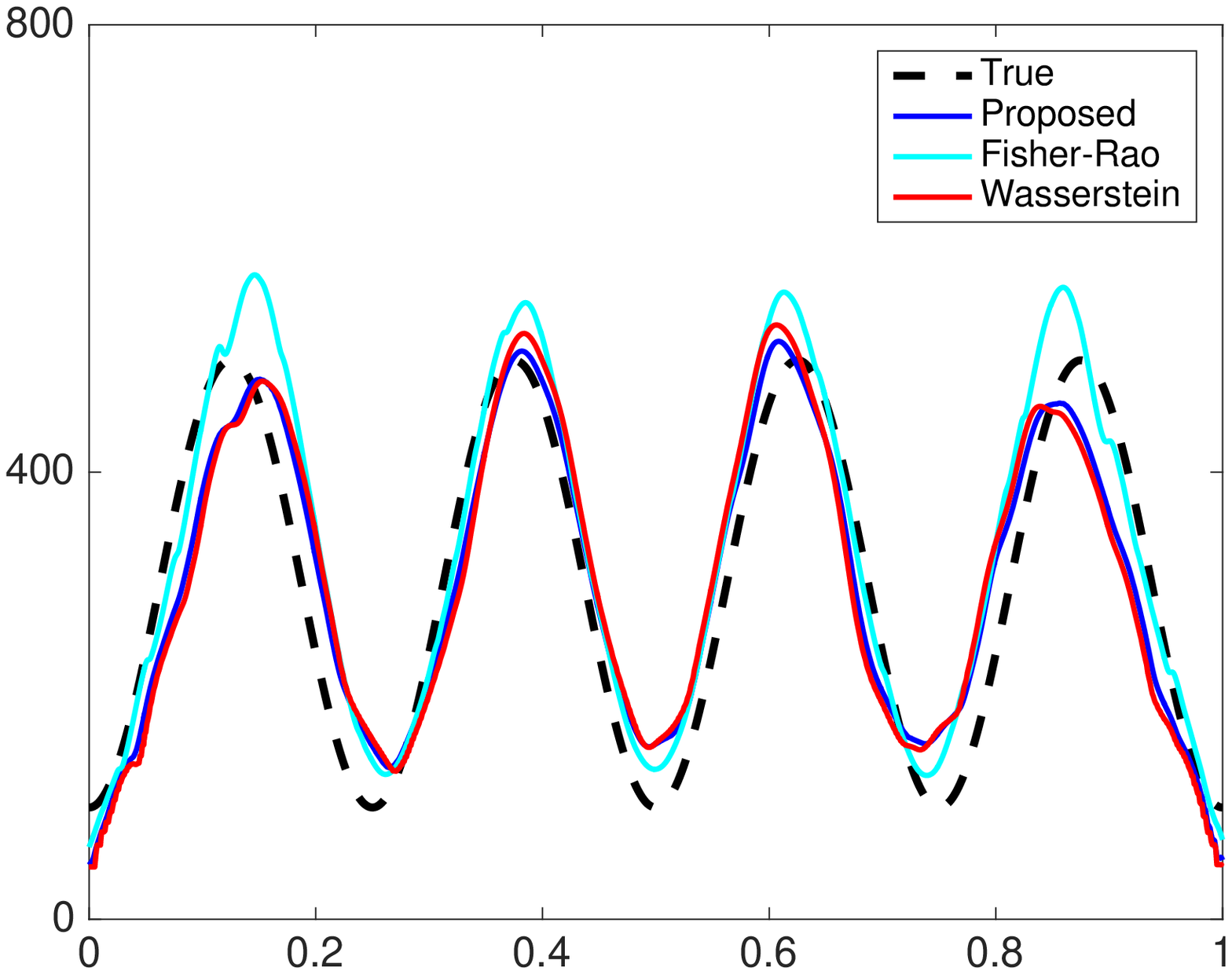}\\
 \textbf{C.} & \textbf{D.} \\
\includegraphics[height=2.0in]{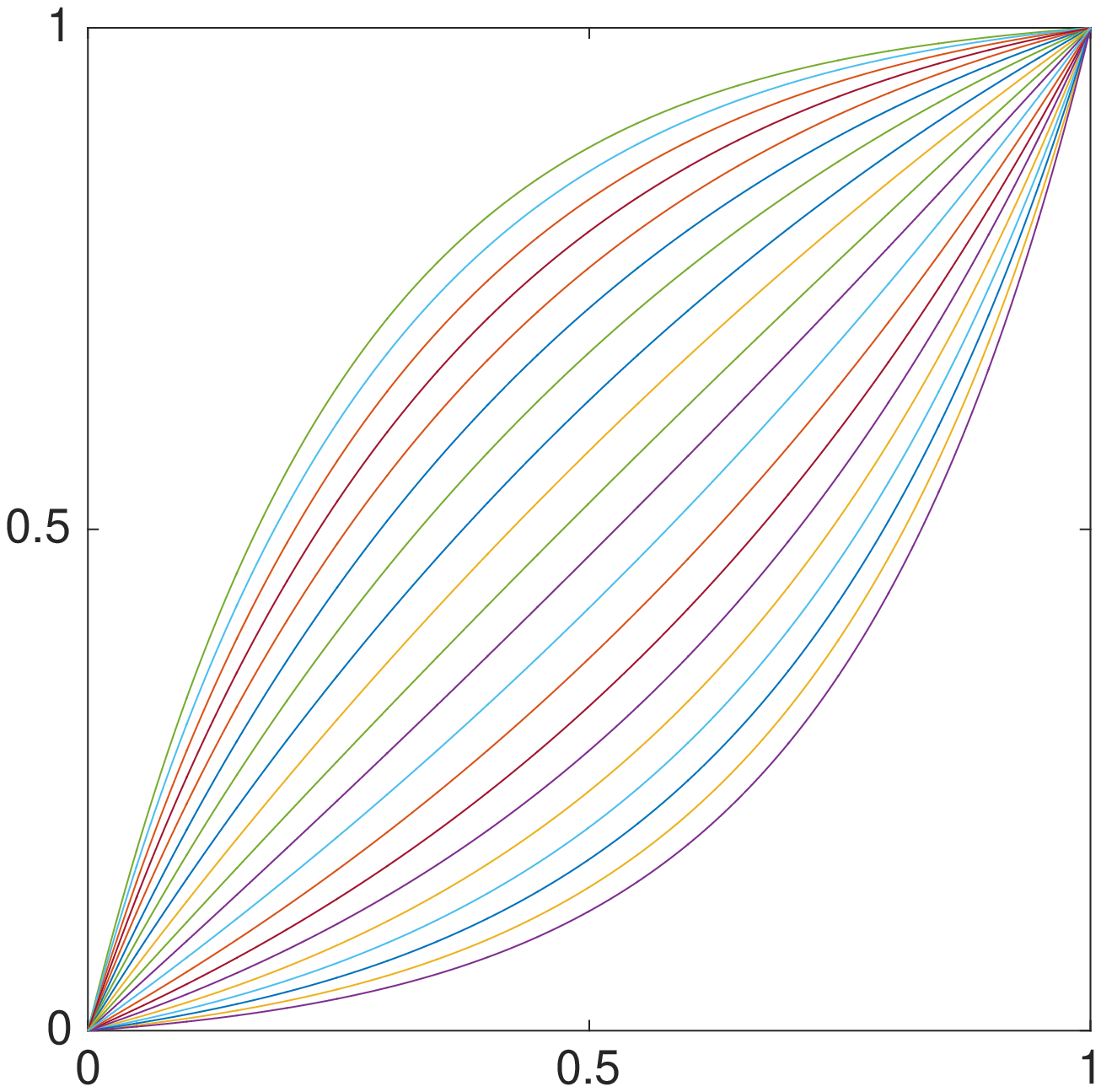}&
 \includegraphics[height=2.0in]{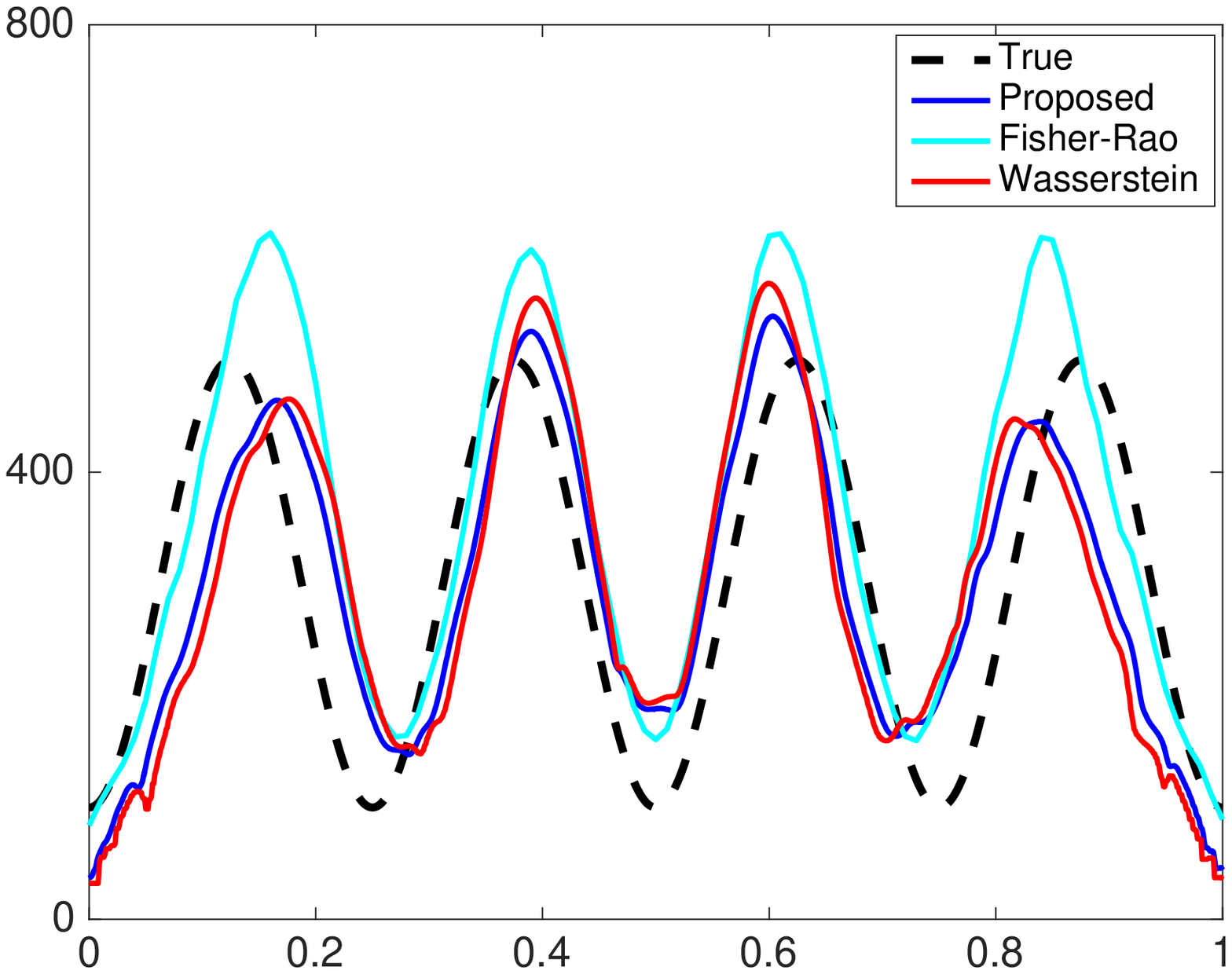}\\

\end{tabular}
\caption{Warped Poisson Process. \textbf{A.} Top panel: estimated individual density functions for warped processes. Bottom panel: True warped density functions. \textbf{B.} Estimated intensity functions computed under four methods. \textbf{C.} Warping functions for second simulation. \textbf{D.} Updated intensity estimates computed under four methods with more severe warping functions from panel \textbf{C}.} \label{fig:simintenest}

\end{center}
\end{figure}

%\begin{table}[ht]
%\begin{center}
%\caption{\ltwo Errors for each method under 2 sets of time warping}\label{table:kmerrors}
%\begin{tabular}{ccccccc} 
%Method & Time Warping 1  & Time Warping 2\\  \hline
%Proposed & 1672.5 &  2876.0   \\
%Fisher-Rao &  1711.3 &   3369.6  \\
%Wasserstein  & 2025.9  & 3715.9  \\ \hline
%\end{tabular}
%\end{center}
%\end{table}
%

\begin{table}[ht]
\begin{center}
\caption{Three types of errors for each method under 2 sets of time warping functions}\label{table:kmerrors}
\begin{tabular}{cc|ccc} 
Time  & Norm & Proposed & Fisher-Rao & Wasserstein \\ 
 Warping & & & & \\ \hline
\multirow{3}{*}{1} & \lone & \bf{81177.1} & 84372.9 & 105504.9 \\
 & \ltwo & \bf{2876.0} & 3369.6 & 3715.9 \\
 & $\mathbb{L}^\infty$ & \bf{166.1} & 243.1 & 207.5 \\ \hline
 \multirow{3}{*}{2} & \lone & \bf{117991.4} & 129957.4 & 155215.8\\
  & \ltwo & \bf{4184.3} & 5377.5 & 5504.3 \\ 
   & $\mathbb{L}^\infty$ & \bf{228.0} & 381.7 & 301.5 \\ \hline
\end{tabular}
\end{center}
\end{table}

\subsubsection{Poisson Process with a Nonnegative Intensity Function}

In this second example, we illustrate the estimation method for non-negative intensity functions in Sec. \ref{sec:nonnegdf}.  
%This demonstration of the proposed method on Poisson processes with non-negative density functions was carried out in several steps. In the first step, a known density function is warped according to known warping functions. Then, those warped density functions are used to simulate 11 Poisson processes. Finally, the simulated processes are treated as the observations and used to estimate the warping functions and underlying density. 
The underlying intensity function is defined on $[0,\,1]$ and given in the following form:
\[\lambda(t) = \left\{
  \begin{array}{lr}
    -16000|t-0.5|+4000 &  t\in [0.25, \, 0.75]\\
    \qquad \quad 0 &  \mathrm{otherwise}
  \end{array}
\right.
\]
This intensity, shown in Fig. \ref{fig:trisim}A, has a trianglar shape with two flat sub-regions, $[0,\,0.25]$ and $[0.75,\,1]$, which occur on either side of the triangle whose peak is located at $t=0.5$. 

\begin{figure}[ht!]
\begin{center}
\begin{tabular}{ccc}
 \hspace{-24pt} \textbf{A} &  \hspace{-24pt} \textbf{B} &  \hspace{-24pt} \textbf{C}\\  \hspace{-24pt}
 \includegraphics[height=1.3in]{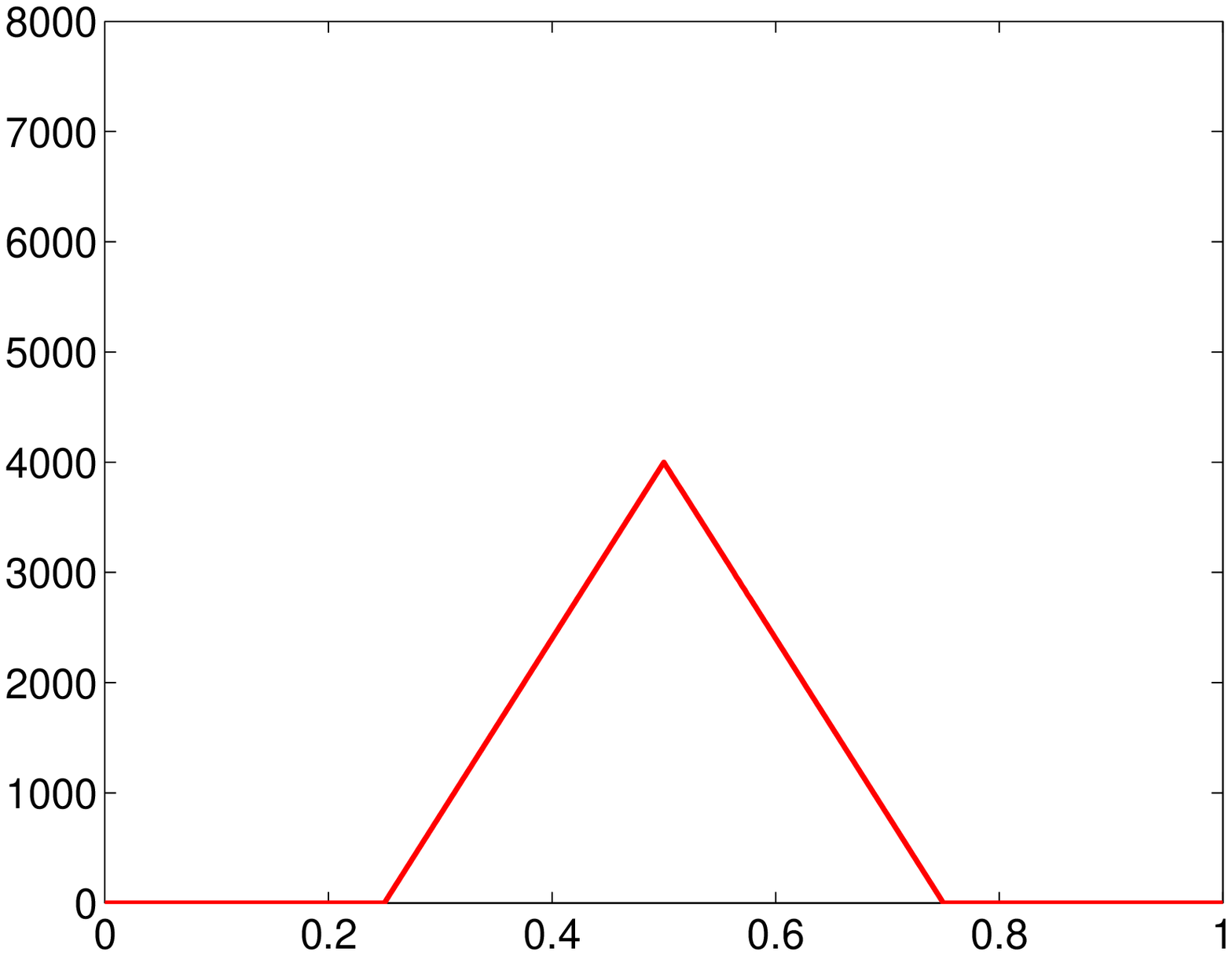}&  \hspace{-24pt}
 \includegraphics[height=1.3in]{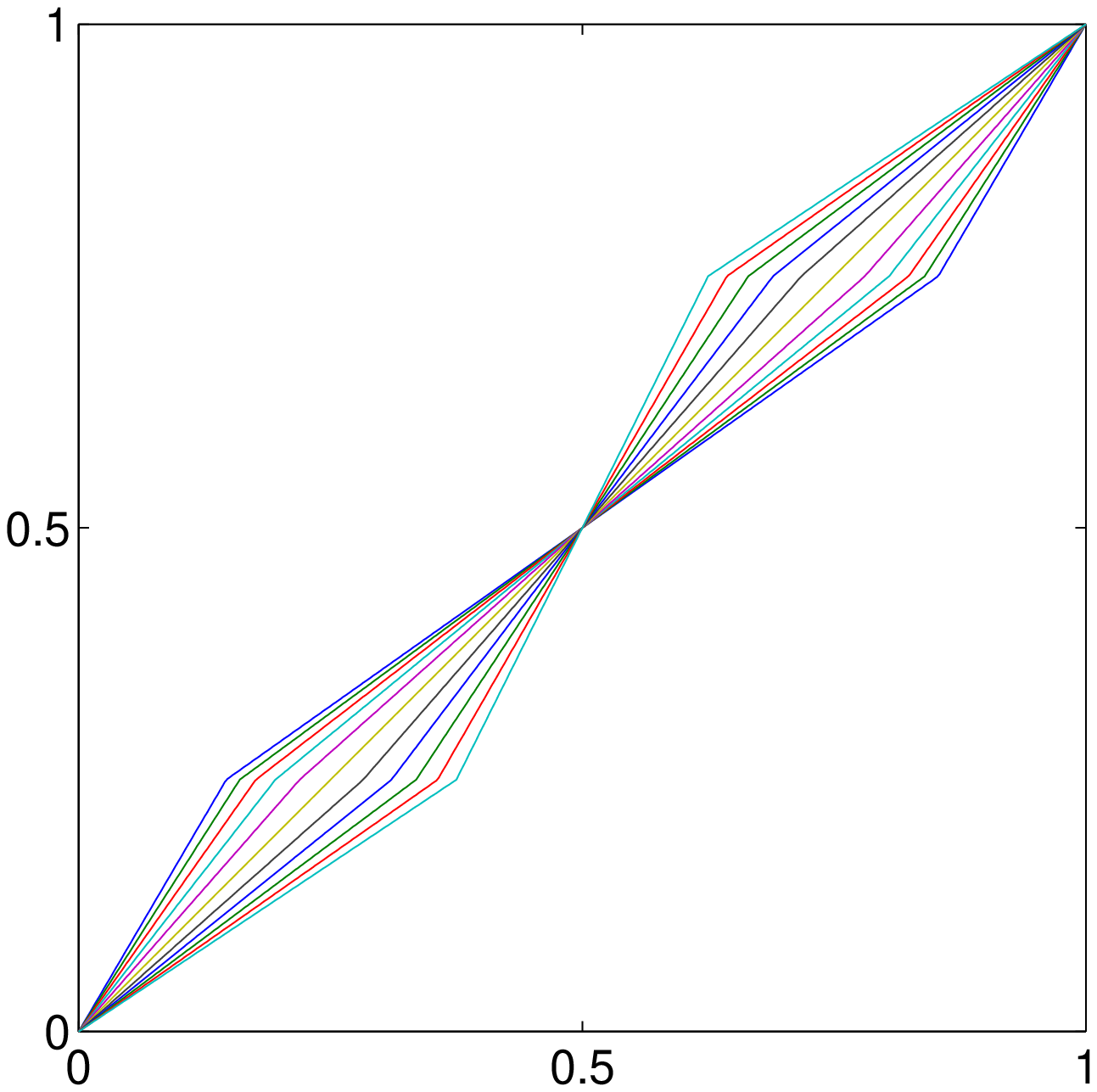}&  \hspace{-24pt}
  \includegraphics[height=1.3in]{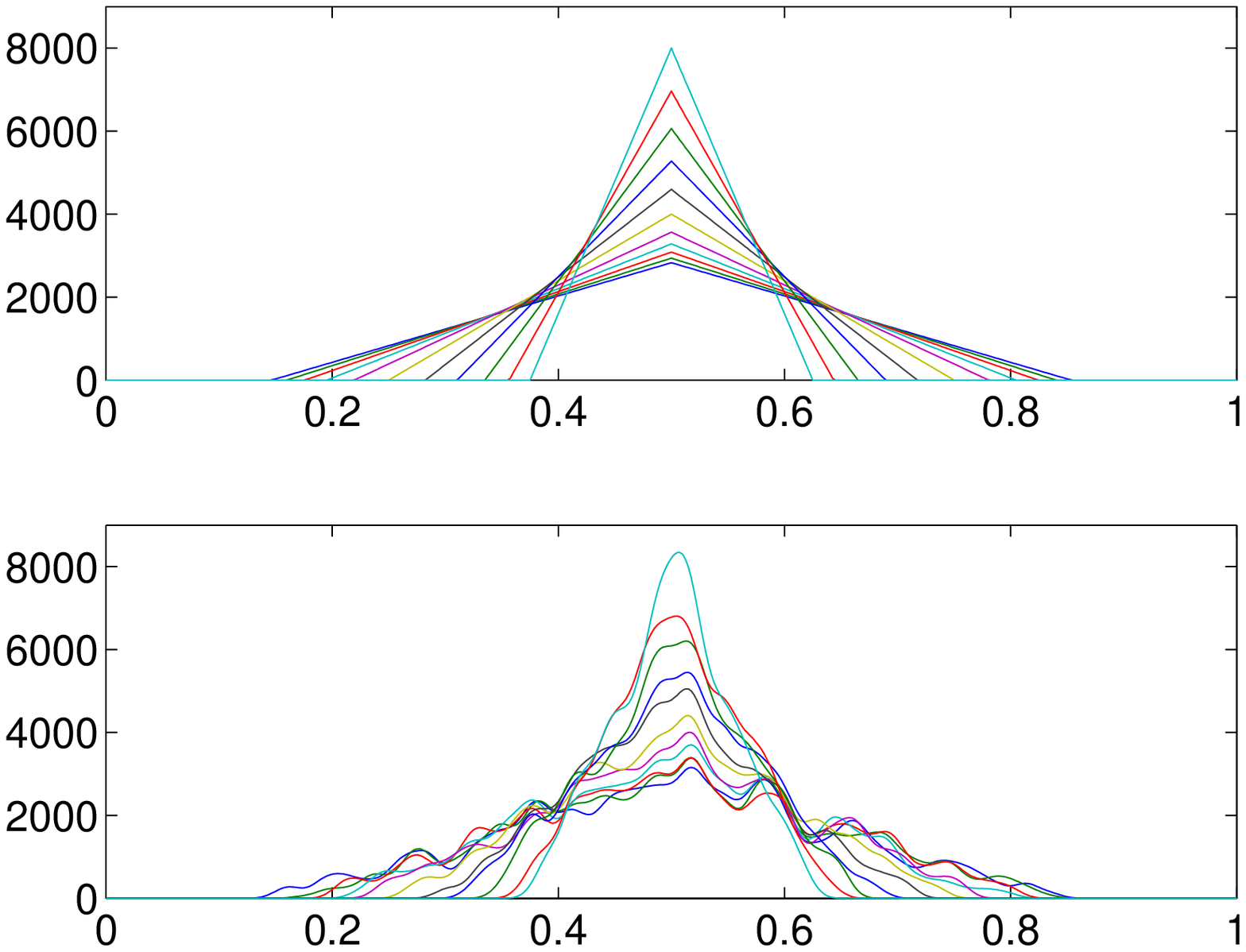}\\
  \hspace{-24pt} \textbf{D} &  \hspace{-24pt} \textbf{E} &  \hspace{-24pt} \textbf{F}\\  \hspace{-24pt}
  \includegraphics[height=1.3in]{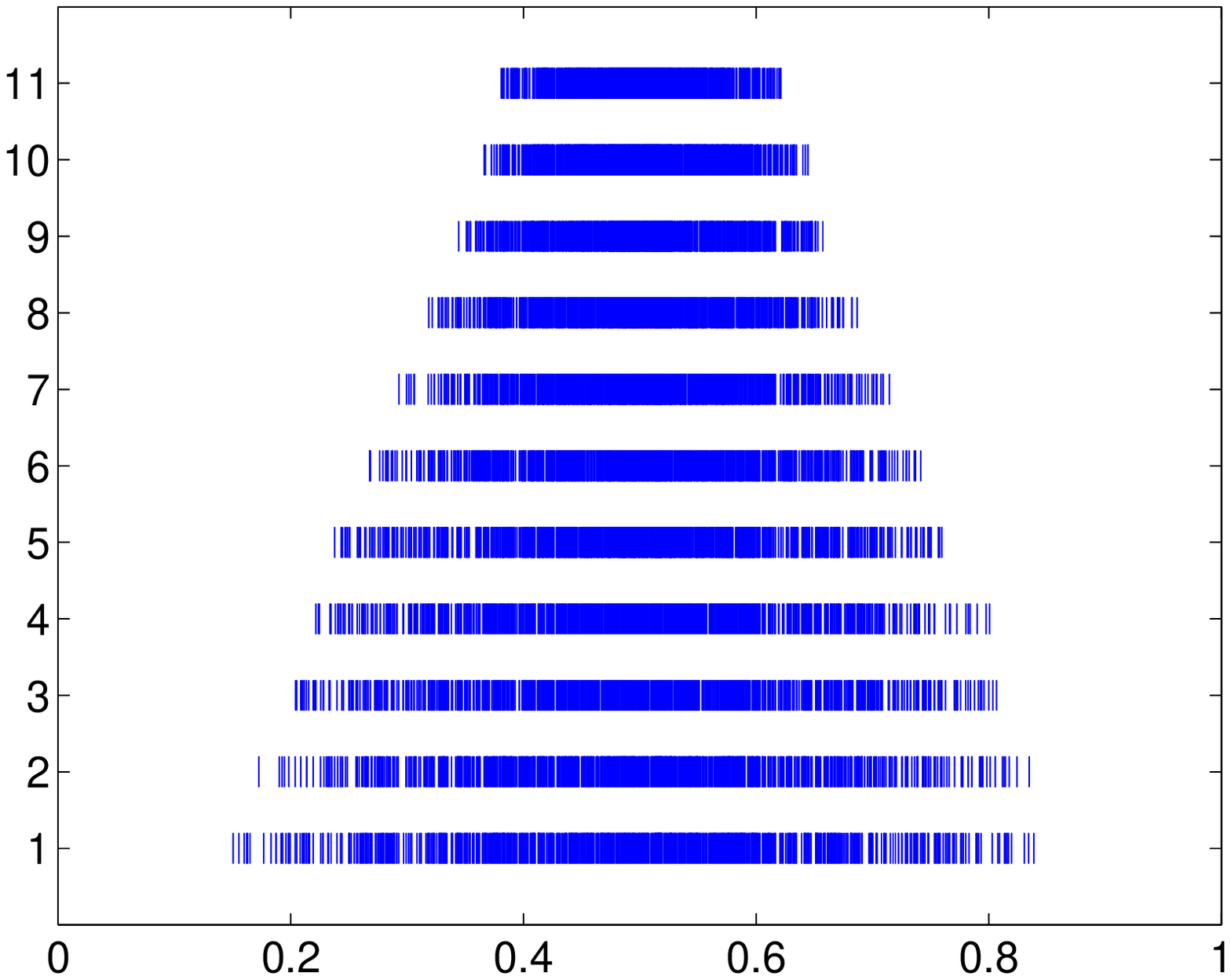}&  \hspace{-24pt}
  \includegraphics[height=1.3in]{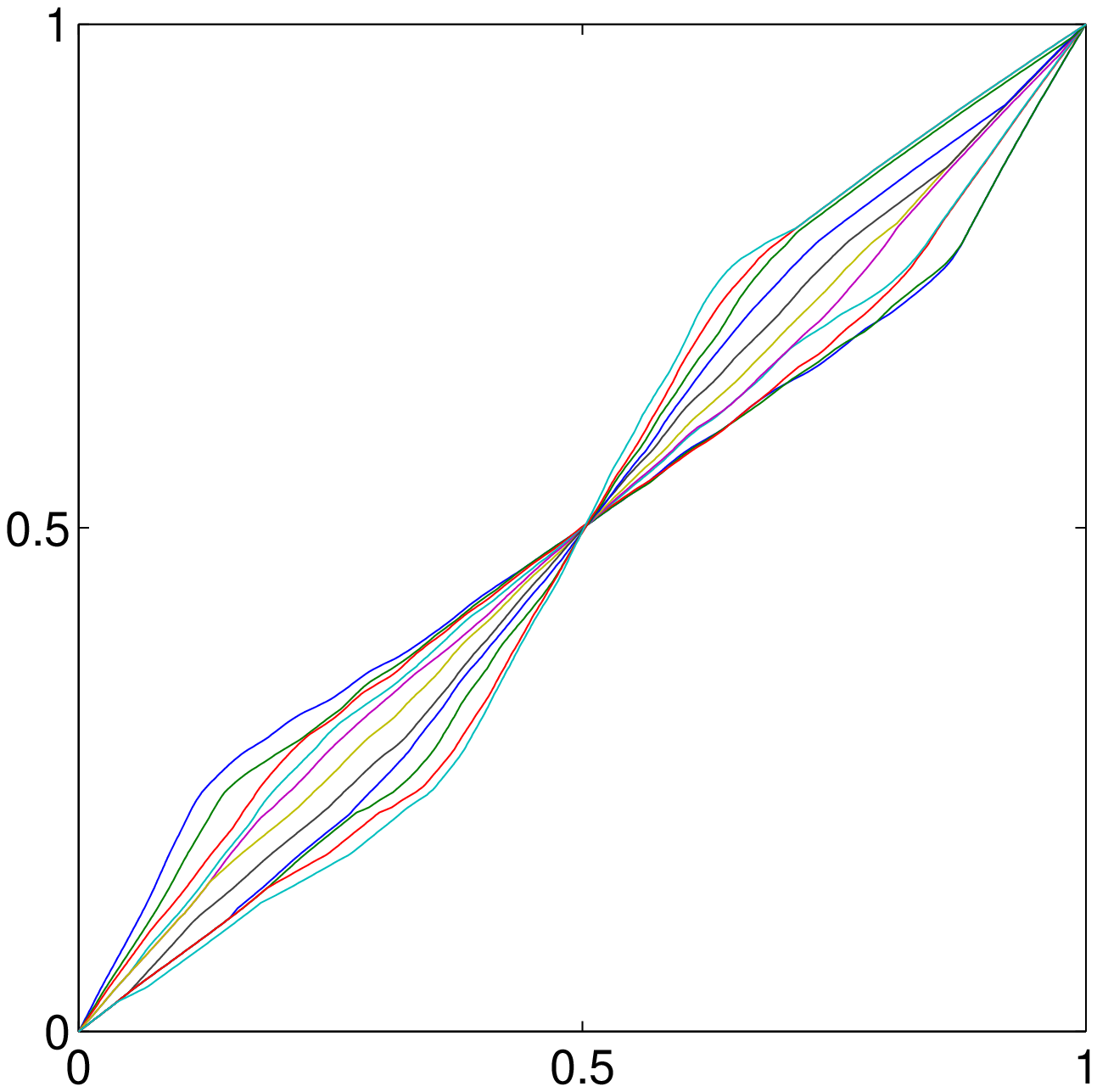} &  \hspace{-24pt}
\includegraphics[height=1.3in]{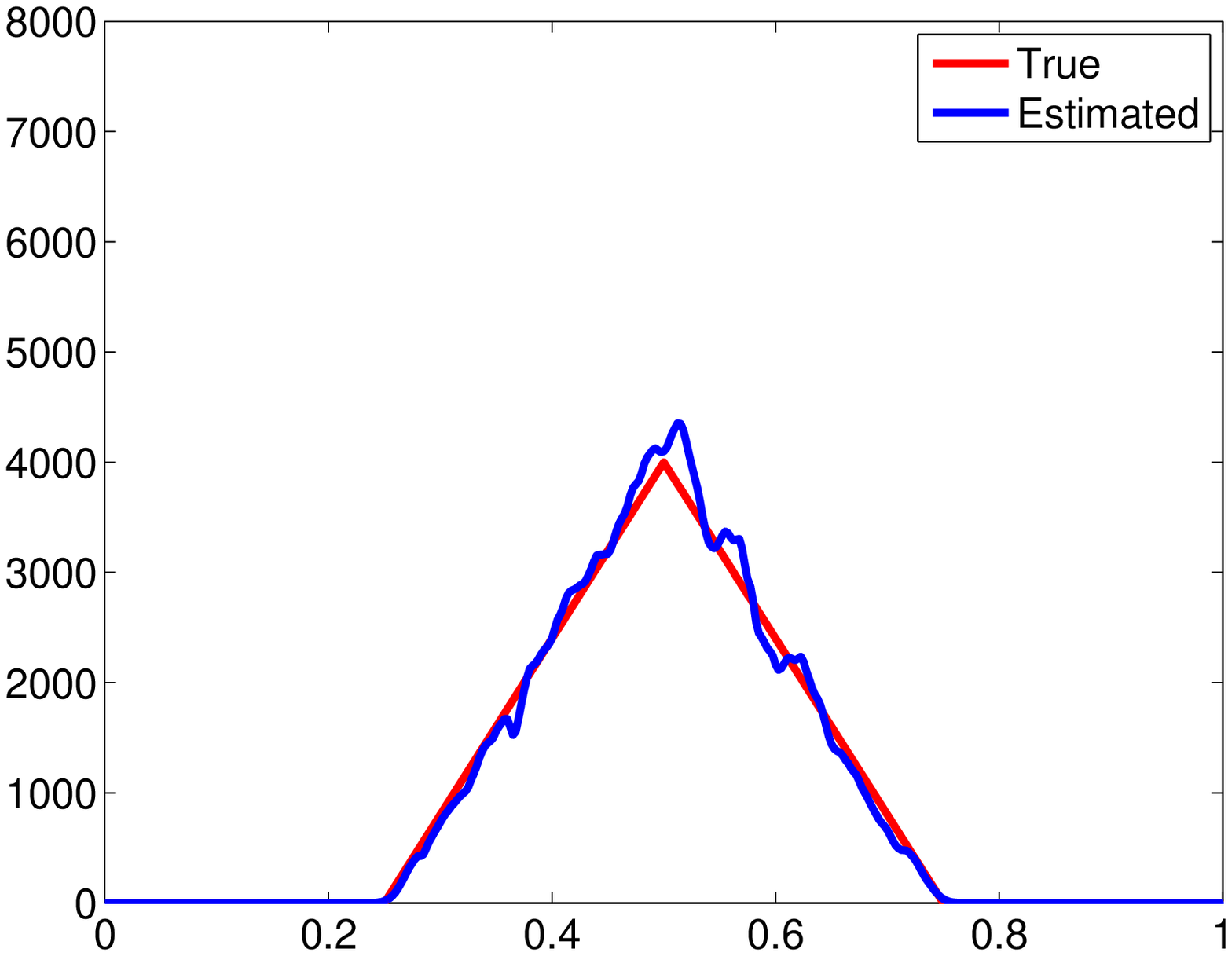} 
\end{tabular}
\caption{Non-Negative Intensity Estimation. \textbf{A.} True intensity function. \textbf{B.} 11 warping functions. \textbf{C.} (top panel) True warped intensity functions and (bottom panel) estimated intensities with modified kernel method.  \textbf{D.}  11 simulated processes with respect to the warped intensities. \textbf{E.} Estimated warping functions. \textbf{F.} Estimated (blue) and true (red) intensity functions} 
\label{fig:trisim}
\end{center}
\end{figure}

We then generate 11 warping functions $\{\gamma_i\}_{i = 1}^{11}$ in the following two steps:  At first, we define $\tilde{\gamma}_i \in \Gamma$ on [0, 1] as:
\begin{equation}
\tilde{\gamma}_i(t)=\dfrac{\mathrm{sign}(2t-1)|2t-1|^{e_i}+1}{2}
\end{equation}
where 
\[e_i = \left\{
  \begin{array}{lr}
    \dfrac{1}{2-0.2(i-1)} &  i=1,\dots,6\\
    0.2(i-6)+1 &  i=7,\dots,11.
  \end{array}
\right.
\]
Then, each $\gamma_i(t)$ is defined by linearizing $\tilde{\gamma}_i(t)$ at the value points $t=[0, 0.25, 0.5$, $0.75,1]$.  These warping functions are shown in Fig. \ref{fig:trisim}B.  The warped intensity functions, $\lambda_i(t) = 
\lambda (\gamma_i(t))\dot \gamma(t)$, are shown in the top panel of Fig. \ref{fig:trisim}C.   We then simulate 11 independent Poisson processes using these 11 intensity functions, respectively, and the results are shown in Fig. \ref{fig:trisim}D.  We can see that these realizations clearly display the warped intensity functions along the time axis.   
Given these noisy Poisson process observations, we aim to reconstruct the underlying intensity function $\lambda(t)$.  

To estimate $\lambda(t)$, we first estimate the warped intensity functions using modified kernel method on the 11 observed realizations.  We fitted a truncated Gaussian kernel with bandwidth $h=0.01$ to estimate the intensities.  The result is shown in the lower panel of Fig. \ref{fig:trisim}C.  Comparing with the true intensities in the corresponding upper panel, we can see the kernel method provides a reasonable estimation.  In spite of the phase shift along the time axis, the kernel method estimates the flat subregions in the underlying intensity appropriately.

Once the individual intensities are estimated, we then compute their Karcher mean to get the the warping functions with formula in Eqn. \ref{eq:opt}. These warping functions were then used to estimate of the underlying intensity function for the process and the result is shown in Fig. \ref{fig:trisim}F.  Comparing the result with the true intensity function, we find that the proposed method provides a very accurate reconstruction.

\subsection{Application in Spike Train Data} \label{subsec:spike1}

\begin{figure}[ht]
\begin{center}
\includegraphics[height=4.5in]{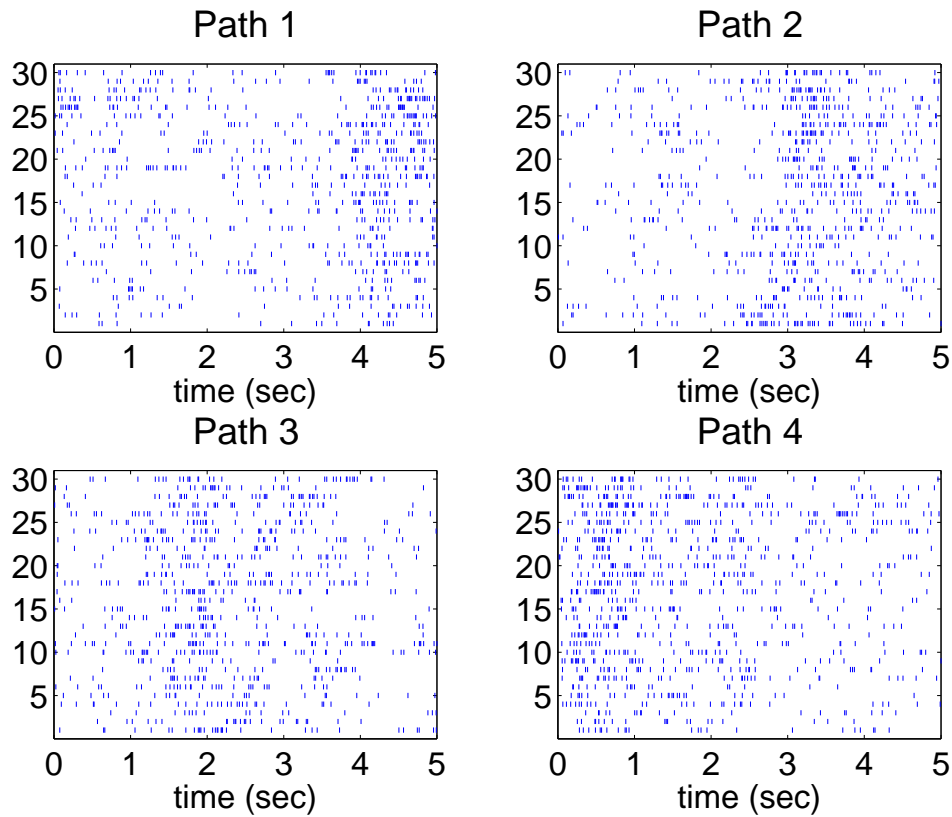} 
\caption{30 spike trains in each of the four movement paths. }
     \label{fig:data}
\end{center}
\end{figure}

\begin{figure}[ht]
\begin{center}
\includegraphics[height=4.5in]{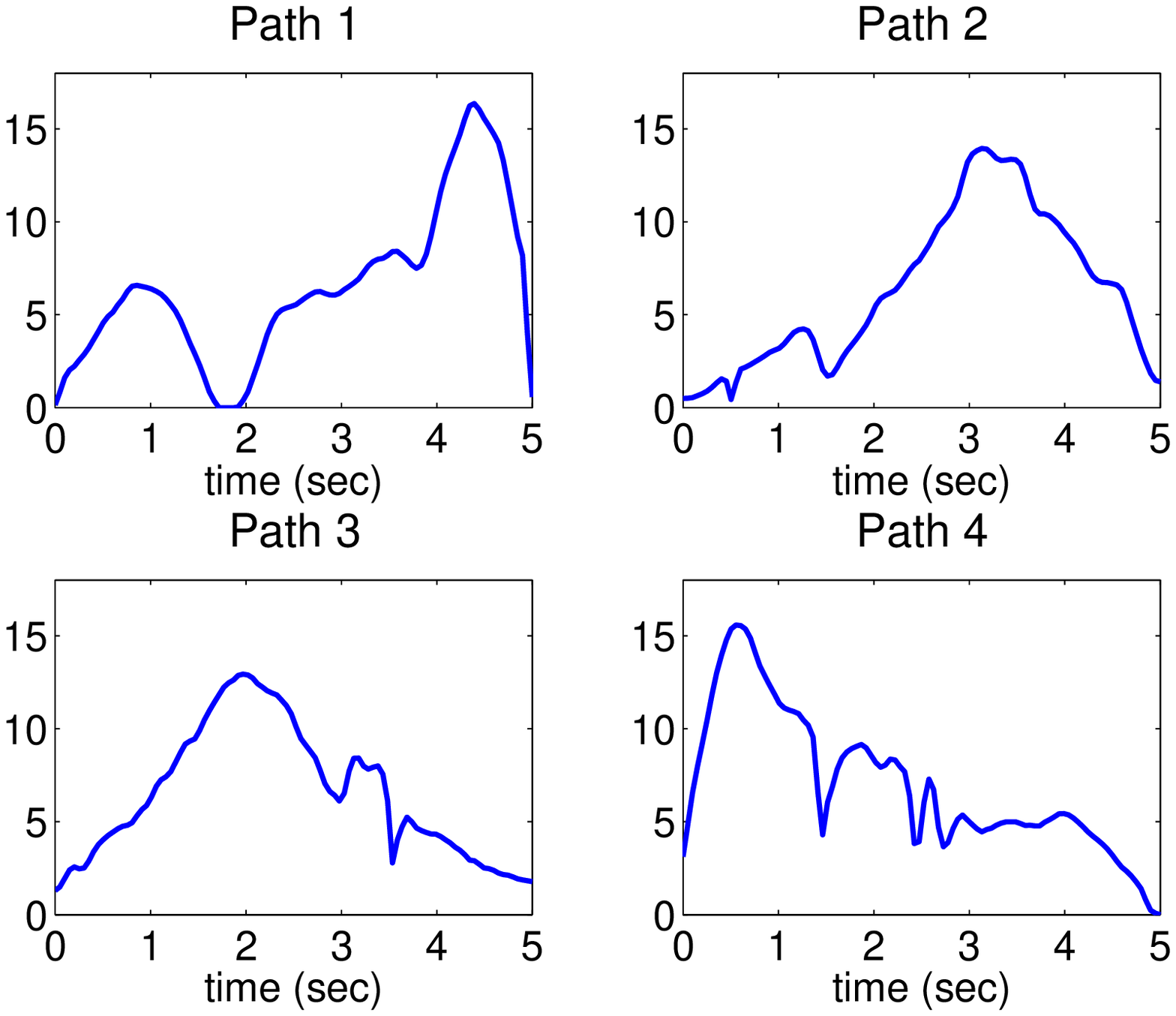}
\caption{Estimated intensity function in each path.}
 \label{fig:sp_mean}
\end{center}
\end{figure}

In this section the proposed intensity estimation method will be applied to a benchmark spike train dataset. This dataset was first used in a metric-based analysis of spike trains \cite{WuSrivastavaJCNS11}, and was also used as a common data set in a workshop on function registration, {\em CTW: Statistics of Time Warpings and Phase Variations} in Mathematical Bioscience Institute in 2012. It is publicly available from {\it http://mbi.osu.edu/2012/stwdescription.html} and is the same dataset used in Chapter 1. For completeness, a brief summary is given again. The spiking activity of one neuron in primary motor cortex was recorded in a juvenile female macaque monkey. In the experimental setting, a subject monkey was trained to perform a closed Squared-Path (SP) task by moving a cursor to targets via contralateral arm movements in the horizontal plane. Basically, the targets in the SP task are all fixed at the four corners of a square and the movement is stereotyped. In each trial the subject reached a sequence of 5 targets which were the four corners of the square with the first and last targets overlapping. Each sequence of 5 targets defined a path, and there were four different paths in the SP task (depending on the starting point).  In this experiment, 60 trials for each path were recorded, and the total number of trials was 240.

To fix a standardized time interval for all data, the spiking activity in each trial is normalized to 5 seconds. For the purpose of intensity estimation, a modified Gaussian kernel (width = 41.67$ms$) was adopted to estimate the underlying density of each of the point process spike trains. Thirty smoothed spike trains in each path are shown in Fig. \ref{fig:data}B. From these data, observe that the densities have a similar pattern within each class; for example, they have similar number of peaks and the locations of these peaks are only slightly different. However, the peak locations across different paths are significantly different. 
 
For the 60 trials in each path, the first 30 of them were chosen as the training data and the other 30 as the test data. The proposed intensity estimation method is tested here to decode neural signals with respect to different movement paths. In general, there are two types of decoding methods: i) classification based on pairwise distance between training and test data, and ii) classification using distance from test data to the Karcher mean in the training data. Note that the pairwise method has a quadratic efficiency (Cost is $O(N^2)$, where $N$ is the number of spike trains in training and testing set), but distance-to-the-mean is in the linear order \cite{WuSrivastavaJCNS11}. In this chapter, the decoding result is reported using the efficient mean-based method.   

Once an estimate for the density of each of the spike trains was obtained, the Karcher mean for each path was calculated using Algorithm 1 with one minor change to overcome numerical issues. In step 2, instead of directly using the CDF and inverse CDF of the two densities, the individual warping functions are found using Dynamic Programming \cite{srivastava-etal-Fisher-Rao-CVPR:2007}. The penalty coefficient used in the Dynamic Programming was 0.01, although the results are robust to the choice of this penalty coefficient. The computed Karcher means in each path are shown in Fig. \ref{fig:sp_mean}.  

Comparing with the original spike trains, all of the mean spike trains appropriately represent the firing patterns in the corresponding movement. For example, the spiking frequency is relatively higher when the hand moves upward, which is apparent in all four means. For the 120 test trains, each train is labeled by the shortest distance over the distances to the four means in the training set.  This computation is apparently more efficient (only $120 \times 4 = 480$ distances need to be computed). It is found that the classification
accuracy using the proposed estimation method is 82.5\%(99/120) whereas the classification accuracies using the naive cross-sectional method and the Fisher-Rao registration method are 77.5\%(93/120) and 55.0\%(66/120), respectively. This result shows the proposed method can better differentiate neural signals with respect to different movement behaviors. The lower accuracy in the naive method indicates that the proposed method improves classification results.   

\section{Discussion}\label{sec:diss}
Intensity estimation has been a classical problem in Poisson process methods.  The problem is significantly challenging if the observed data are corrupted with compositional noise, i.e. there is time warping noise in each realization.  In the paper, we have proposed a novel alignment-based algorithm for positive intensity estimation.  The method is based on a key fact that the intensity function is area-preserved with respect to compositional noise. Such a property implies that the time warping is only encoded in the normalized intensity, or density, function.  Based on this finding, we decompose the estimation of intensity by the product of estimated total intensity and estimated density.  Our investigation on asymptotics shows that the proposed estimation algorithm provides a consistent estimator for the underlying density.  We further extend the method to all nonnegative intensity functions, and provide simulation examples to illustrate the success of the estimation algorithms. 

While results from this method show promising improvements over previous methods, it is important to note that the method is dependent upon the kernel density estimates of the observed processes. In general, kernel density estimates are highly dependent upon the chosen bandwidth $h$ \cite{ramsay-silverman-2005, ferraty-vieu-2006}. In this paper, we have used a simple plug-in method to determine an appropriate bandwidth. In future work, we will consider the development of an algorithm that can automatically choose the optimal bandwidth for the modified kernel density estimator. Additionally, future work will examine the asymptotic variability of this estimator and an extension to general Cox processes for conditional intensity estimation. 

\section*{Appendix}
\subsection*{A. Proof on proper metric $d_{ext}$} 

\begin{proof} We prove that $d_{ext}$ is a proper metric by verifying three properties: 
\begin{enumerate}
\item (Positive Definiteness) It is apparent that $d_{ext}(f_1, f_2) \ge 0$.  By Theorem 1, there exists $\gamma_{12}$, such that $f_1 = (f_2; \gamma_{12})$.  Therefore, $d_{ext}(f_1,f_2)=0
 \Leftrightarrow \Vert 1 - \sqrt{\dot{\gamma}_{12}} \Vert=0 \Leftrightarrow \gamma_{12}(t)=\gamma_{id}$. Hence, $f_1=f_2$.
\item (Symmetry)
$\Vert 1-\sqrt{\dot{\gamma}_{21}}\Vert^2=\Vert 1-\sqrt{\dot{\gamma}_{12}^{-1}}\Vert^2 = \int_0^1 \left( 1-\sqrt{\dot{\gamma}_{12}^{-1}(s)}\right)^2 ds 
% let $s=\gamma(t) \, \Rightarrow \, ds=\dot{\gamma}_{12}(t)dt$\\
= \int_0^1 \left(1-\dfrac{1}{\sqrt{\dot{\gamma}_{12}(t)}}\right)^2\dot{\gamma}_{12}(t)dt
%& = & $\int_0^1 \left( \sqrt{\dot{\gamma}_{12}(t)}-1\right)^2dt$\\
%& = & $\Vert \sqrt{\dot{\gamma}_{12}(t)} -1\Vert^2$\\
= \Vert 1-\sqrt{\dot{\gamma}_{12}(t)}\Vert^2.$
%Also, $s=\gamma_{12}(t)=\gamma_{12}\left(\gamma_{12}^{-1}(s)\right) \Rightarrow 1=\dot{\gamma}_{12}\left(\gamma_{12}^{-1}(s)\right)=\dot{\gamma}_{12}(t)\gamma_{12}^{-1}(s)$
%Note that $f_2(t)=f_1\circ\gamma_{12}(t) \Rightarrow f_2\circ\gamma_{12}^{-1}(t)=f_1(t) \Rightarrow \gamma_{12}=\gamma_{12}^{-1} \,\forall t\,$ because $f_1(t)=f_2\circ\gamma_{21}(t)$\\
Therefore, $d_{ext}(f_1,f_2)=d_{ext}(f_2,f_1)$.

\item (Triangle Inequality) 
Let $f_2=f_1\left( \gamma_{12}(t)\right)$, $f_3=f_2\left(\gamma_{23}(t)\right)$, $\gamma_{13}=\gamma_{12}\circ\gamma_{23}$. Then, 
$d_{ext}(f_1, f_3) = \Vert 1-\sqrt{\dot{\gamma}_{13}} \Vert 
= \Vert 1-\sqrt{\left(\dot{\gamma}_{12}\circ \gamma_{23}\right)\dot{\gamma}_{23}} \Vert 
\le \Vert 1-\sqrt{\dot{\gamma}_{23}}\Vert+\Vert\sqrt{\dot{\gamma}_{23}}-\sqrt{\dot{\gamma}_{13}}\Vert 
= \Vert 1-\sqrt{\dot{\gamma}_{23}}\Vert+\Vert 1 - \sqrt{\dot{\gamma}_{12}}\Vert$.   Note that 
$\Vert\sqrt{\dot{\gamma}_{23}}-\sqrt{\dot{\gamma}_{13}}\Vert 
=  \Vert \left(1,\gamma_{23}\right)-\left(1,\gamma_{13}\right)\Vert
= \Vert \left(1,\gamma_{23}\right)-\left(1,\gamma_{12}\circ\gamma_{23}\right)\Vert
= \Vert 1 - \sqrt{\dot{\gamma}_{12}}\Vert$  (by isometry)
Thus, $d_{ext}(f_1, f_3) \le d_{ext}(f_1, f_2) + d_{ext}(f_2, f_3)$.
\end{enumerate}
\end{proof}

\subsection*{B. Proof on the consistency of $\hat f$} 

\begin{lemma}
\label{lem:kernel}
Let $g$ be a probability density function on [0, 1].  $\{X_i\}_{i=1}^n$ are a set of i.i.d. random variables with density $g$.  If $\hat g_n$ is a modified kernel estimate with optimal bandwidth given in Algorithm 2, then
$$ \int_0^1 |\hat g_n(t) - g(t)|dt \xrightarrow{a.s.} 0 \ (\mbox{when } n \rightarrow \infty) $$ 
\end{lemma}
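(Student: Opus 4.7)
The plan is to reduce almost-sure $L^1$ consistency of the modified estimator $\hat g_n$ to the classical Devroye--Gy\"orfi universal $L^1$ consistency theorem for the standard kernel density estimator on $\mathbb{R}$, and then to verify that the two post-processing operations in Algorithm 2 (the reflection/wrap-around and the convex mixing with the uniform) only perturb the estimate in $L^1$ by a negligible amount.

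First, I would extend $g$ to all of $\mathbb{R}$ by zero and view the raw estimate of Step 1,
\[
\tilde g_n(t) \;=\; \frac{1}{n\, h_n}\sum_{j=1}^n K\!\left(\frac{t-X_j}{h_n}\right),
\]
as the standard kernel density estimator of this extended density. With $K$ bounded and compactly supported in $[-1,1]$, and with the bandwidth sequence satisfying $h_n \to 0$ and $n h_n \to \infty$ (Assumption 5), the Devroye--Gy\"orfi theorem gives
\[
\int_{\mathbb{R}} |\tilde g_n(t) - g(t)|\,dt \;\xrightarrow{a.s.}\; 0.
\]
Restricting to any measurable subset preserves the convergence, so $\int_0^1 |\tilde g_n - g|\,dt \to 0$ a.s.; and since $g \equiv 0$ outside $[0,1]$, the two ``leakage'' masses $\int_{-1}^0 \tilde g_n$ and $\int_1^2 \tilde g_n$ also vanish almost surely.

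Next, I would handle Step 2. For $t\in[0,1]$, $\tilde{\tilde g}_n(t) = \tilde g_n(t) + \tilde g_n(-t) + \tilde g_n(2-t)$; using $g(-t)=g(2-t)=0$ together with the triangle inequality and the substitutions $s=-t$, $s=2-t$,
\[
\int_0^1 \bigl|\tilde{\tilde g}_n - g\bigr|\,dt \;\le\; \int_0^1 |\tilde g_n - g|\,dt + \int_{-1}^0 \tilde g_n(s)\,ds + \int_1^2 \tilde g_n(s)\,ds,
\]
each summand of which we have just shown tends to $0$ a.s. Finally, Step 3 gives $\hat g_n = \frac{n}{n+1}\tilde{\tilde g}_n + \frac{1}{n+1}$, and therefore
\[
\int_0^1 |\hat g_n - g|\,dt \;\le\; \frac{n}{n+1}\int_0^1 |\tilde{\tilde g}_n - g|\,dt + \frac{1}{n+1}\Bigl(1+\textstyle\int_0^1 g\Bigr) \;=\; \frac{n}{n+1}\int_0^1 |\tilde{\tilde g}_n - g|\,dt + \frac{2}{n+1},
\]
which tends to $0$ almost surely, completing the proof.

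The only substantive input is the Devroye--Gy\"orfi theorem; everything else is bookkeeping. The one pitfall to guard against is making sure that Assumption 5 really does deliver \emph{almost-sure} (not just in-probability) $L^1$ consistency of the raw estimator, which holds for any bounded, compactly supported kernel by the complete-convergence refinement of the basic theorem, and for which I would cite the Devroye--Gy\"orfi monograph.
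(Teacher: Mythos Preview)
Your proof is correct and takes a somewhat different, cleaner route than the paper's. Both arguments begin by quoting an almost-sure $L^{1}$ consistency result for the raw kernel estimator $\tilde g_n$: the paper cites Silverman, you cite the Devroye--Gy\"orfi strong universal consistency theorem. The divergence is in how the reflection step is controlled. The paper decomposes $[0,1]$ into the boundary strips $[0,h_n]$, $[1-h_n,1]$ and the interior $[h_n,1-h_n]$; on the interior the reflected terms $\tilde g_n(-t)$ and $\tilde g_n(2-t)$ vanish automatically by the support condition on $K$, while on each boundary strip it bounds the mass by $\tfrac{1}{n}\sum_i \mathbf{1}_{\{X_i\le 2h_n\}}$ and invokes a strong law of large numbers for triangular arrays to drive this to zero. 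You instead use the full-line $L^{1}$ convergence of $\tilde g_n$ to the zero-extended $g$: since $g\equiv 0$ off $[0,1]$, the leakage masses $\int_{-1}^{0}\tilde g_n$ and $\int_{1}^{2}\tilde g_n$ are dominated by $\int_{\mathbb{R}\setminus[0,1]}|\tilde g_n-g|$ and hence vanish almost surely, after which a single change of variables and the triangle inequality handle $\tilde{\tilde g}_n$. The final convex-combination step is essentially identical in both proofs. Your approach buys you freedom from the triangular-array SLLN and a shorter argument; the paper's approach is more self-contained (it never leaves $[0,1]$) and would still go through if one only had an $L^{1}$ consistency statement restricted to the unit interval.
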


\begin{proof}
Let $\tilde g_n(t) = \frac{1}{nh_n} \sum_{i=1}^n K(\frac{t-X_i}{h_n})$ be the classical kernel estimator with kernel function $K$ and optimal bandwidth $h_n$ (i.e. $h_n \rightarrow 0$ and $nh_n \rightarrow \infty$).  Then,  we can obtain from Equation 3.84 of \cite{silverman86} that $\int_0^1 |\tilde g_n(t) - g(t)|dt \xrightarrow{a.s.} 0$.  

As $K(t) = 0$ when $|t| < 1$, we have 
\begin{eqnarray}
& & \int_0^1 |\hat g_n(t) - \tilde g_n(t)|dt \nonumber \\
&=& \int_0^{h_n} |\hat g_n(t) - \tilde g_n(t)|dt + \int_{1-h_n}^1 |\hat g_n(t) - \tilde g_n(t)|dt  
 + \int_{h_n}^{1-h_n} |\frac{1}{n+1}\tilde g_n(t) + \frac{1}{n+1}|dt  \nonumber \\
&\le& \int_0^{h_n} \hat g_n(t)dt + \int_0^{h_n} \tilde g_n(t) dt + \int_{1-h_n}^1 \hat g_n(t) dt + \int_{1-h_n}^1 \tilde g_n(t) dt + \frac{2}{n+1}.  \label{eq:kernel}
\end{eqnarray}
Here we will show that the first term goes to 0 (a.s.).  Indeed, 
\begin{eqnarray*}
\int_0^{h_n} \hat g_n(t)dt &=& \int_0^{h_n} \frac{1}{nh_n} \sum_{i=1}^n K(\frac{t-X_i}{h_n}) dt  
=  \int_0^{h_n} \frac{1}{nh_n} \sum_{X_i \le 2h_n} K(\frac{t-X_i}{h_n}) dt  \\
& \le & \int_0^{1} \frac{1}{nh_n} \sum_{X_i \le 2h_n} K(\frac{t-X_i}{h_n}) dt  
= \frac{1}{n} \sum_{X_i \le 2h_n} 1  
= \frac{1}{n} \sum_{i=0}^1 \mathbf{1}_{\{{X_i \le 2h_n}\}}
\end{eqnarray*}
where $\mathbf{1_{\{\cdot\}}}$ is the indicator function.  By the Strong Law of Large Numbers on triangular arrays  \cite{wangwu11}, 
$$\frac{1}{n} \sum_{i=0}^n (\mathbf{1}_{\{{X_i \le 2h_n}\}} - E \mathbf{1}_{\{{X_i \le 2h_n}\}}) \rightarrow 0. (a.s.)$$ 
As  $ E \mathbf{1}_{\{{X_i \le 2h_n}\}} =  \int_0^{2h_n} f(t)dt \rightarrow 0,$ we have
$\int_0^{h_n} \hat g_n(t)dt \xrightarrow{a.s.} 0$.  The convergence to 0 for the second to fourth terms on the RHS of Eqn. \ref{eq:kernel} can be similarly proven, and therefore  $\int_0^1 |\hat g_n(t) - \tilde g_n(t)|dt \xrightarrow{a.s.} 0. $  Finally, we have 
$$\int_0^1 |\hat g_n(t) - g(t)|dt \le \int_0^1 |\hat g_n(t) - \tilde g_n(t)|dt + \int_0^1 |\tilde g_n(t) - g(t)|dt \xrightarrow{a.s} 0.$$    
\end{proof}

\begin{lemma}
\label{lem:inverse}
Let $G$ and $\hat G_n$ denote the cumulative distribution functions of $g$ and $\hat g_n$ in Lemma \ref{lem:kernel}, respectively.  Assume the density $g$ is continuous and for any $t \in [0,1]$, $0 < m \le g(t) \le M <\infty$ (Condition 2 in Sec. \ref{sec:theory}). 
If $G$ and $\hat G_n$ are invertible and the inverse functions are differentiable, then 
$$ \int_0^1 (\sqrt{{\dot {\hat G}}_n^{-1} (t)} - \sqrt{{\dot G}^{-1}(t)})^2dt \xrightarrow{a.s.} 0 \ (\mbox{when } n \rightarrow \infty) $$ 
\end{lemma}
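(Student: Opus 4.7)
My plan is to bound the squared SRVF distance by the $L^1$ distance of the inverse-CDF derivatives themselves, and then reduce that in turn to the $L^1$ convergence of $\hat g_n \to g$ from Lemma~\ref{lem:kernel} via two changes of variable. The first step uses the elementary inequality $(\sqrt{a}-\sqrt{b})^2 \le |a-b|$ for $a,b\ge 0$, which yields
\[
\int_0^1 \bigl(\sqrt{\dot{\hat G}_n^{-1}(t)} - \sqrt{\dot G^{-1}(t)}\bigr)^2\,dt \;\le\; \int_0^1 \bigl|\dot{\hat G}_n^{-1}(t) - \dot G^{-1}(t)\bigr|\,dt,
\]
so it suffices to prove the right-hand side vanishes almost surely.

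Next I would use $\dot G^{-1}(t) = 1/g(G^{-1}(t))$ and $\dot{\hat G}_n^{-1}(t) = 1/\hat g_n(\hat G_n^{-1}(t))$, substitute $t = G(u)$, and introduce $\phi_n = \hat G_n^{-1}\circ G$, whose derivative is $\dot\phi_n(u) = g(u)/\hat g_n(\phi_n(u))$. A short computation turns the $L^1$ bound into $\int_0^1 |\dot\phi_n(u) - 1|\,du$. A second substitution $v = \phi_n(u)$, whose Jacobian is exactly $\hat g_n(v)/g(\phi_n^{-1}(v))$, cancels the $\hat g_n$ in the denominator and produces
\[
\int_0^1 \frac{|g(\phi_n^{-1}(v)) - \hat g_n(v)|}{g(\phi_n^{-1}(v))}\,dv \;\le\; \frac{1}{m}\int_0^1 |g(\phi_n^{-1}(v)) - g(v)|\,dv + \frac{1}{m}\|\hat g_n - g\|_1,
\]
where the bound uses $g \ge m$ and the triangle inequality.

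The second term vanishes almost surely by Lemma~\ref{lem:kernel}. For the first, I would first establish uniform convergence $\phi_n^{-1}\to \gamma_{id}$ almost surely: from $\|\hat G_n - G\|_\infty \le \|\hat g_n - g\|_1 \to 0$ the CDFs converge uniformly, and since $g\ge m$ makes $G^{-1}$ Lipschitz with constant $1/m$, inverting preserves uniform convergence, giving $\phi_n^{-1}(v) = G^{-1}(\hat G_n(v)) \to v$ uniformly. Continuity of $g$ on the compact interval $[0,1]$ then implies uniform continuity, so $|g(\phi_n^{-1}(v)) - g(v)| \to 0$ uniformly in $v$ and the first integral tends to zero.

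The main obstacle I expect is the absence of a uniform lower bound on $\hat g_n$: without one, $1/\hat g_n(\phi_n(u))$ could in principle blow up and obstruct every subsequent estimate. The crucial move is the second change of variable, whose Jacobian $\dot\phi_n^{-1}(v) = \hat g_n(v)/g(\phi_n^{-1}(v))$ cancels $\hat g_n$ in the denominator exactly and replaces it with $g$, which is bounded below by $m$ by assumption. Once that swap is in place, everything reduces cleanly to the already-established $L^1$ convergence of $\hat g_n$ and to the uniform continuity of $g$.
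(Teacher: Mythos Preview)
Your proposal is correct and follows essentially the same approach as the paper. Both arguments reduce to the $L^1$ distance via $(\sqrt{a}-\sqrt{b})^2 \le |a-b|$, then use a change of variable whose Jacobian cancels the potentially unbounded $1/\hat g_n$ factor, and finish with uniform continuity of $g$ together with uniform convergence of the inverse CDFs; the paper splits $1/\hat g_n(\hat F_n)-1/g(F)$ via the intermediate $1/g(\hat F_n)$ and substitutes $s=\hat F_n(t)$ directly, whereas your two substitutions $t=G(u)$, $v=\phi_n(u)$ compose to exactly that same substitution, so the reorganization is cosmetic.
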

\begin{proof}
To simplify notation, we let $F = G^{-1}$, $\hat F_n = \hat G_n^{-1}$, $f = \dot F = {\dot G}^{-1}$, and $\hat f_n =  {\dot {\hat F}}_n = {\dot {\hat G}}_n^{-1} $. 
For any $t \in [0, 1], |\hat G_n (t) - G(t)| \le \int_0^1 |\hat g_n(t) - g(t)| dt \xrightarrow{a.s.} 0$ (by Lemma 3).   That is, $\hat G_n \rightrightarrows G$ (uniform convergence) almost surely.  By the theory on convergence of inverse functions \cite{wangwu11}, we also got that  $\hat F_n \rightrightarrows F \ (a.s.)$.  

By definition, $G(F(t)) = t$ and $ \hat G_n (\hat F_n (t)) = t$.  Using the chain rule, we have $g(F(t)) f(t) = 1$ and $\hat g_n(\hat F_n(t)) \hat f_n(t) = 1$.  Therefore, 
\begin{eqnarray*}
 & &\int_0^1  |\hat f_n (t) - f(t)|dt  \\
 &=& \int_0^1 |\frac{1}{\hat g_n(\hat F_n(t))} - \frac{1}{g(F(t))} | dt \\
 &\le& \int_0^1 |\frac{1}{\hat g_n(\hat F_n(t))} - \frac{1}{g(\hat F_n(t))} | dt + \int_0^1 |\frac{1}{g(\hat F_n(t))} - \frac{1}{g(F(t))} | dt \\
\end{eqnarray*}
Here we will show that each integration in the right-hand side indeed converges to 0 (a.s.).  By Lemma \ref{lem:kernel},
\begin{eqnarray*}
& & \int_0^1 |\frac{1}{\hat g_n(\hat F_n(t))} - \frac{1}{g(\hat F_n(t))} | dt \\
&=& \int_0^1 |\frac{1}{\hat g_n(s)} - \frac{1}{g(s)} | \hat g_n(s) ds \   \mbox{ (by change of variable)}  \\
&=& \int_0^1 \frac{1}{g(s)}  |{\hat g_n(s)} - {g(s)} | ds \le \frac{1}{m} \int_0^1 |{\hat g_n(s)} - {g(s)} | ds \xrightarrow{a.s.} 0 
\end{eqnarray*}
By assumption, $g$ is continuous and positively bounded. Hence, $1/g$ is also continuous.  This continuity is uniform because the domain $[0,1]$ is compact.  That is, for any $\epsilon > 0$, there exists $\delta >0$, such that for all $a,b \in [0,1]$ with $|a-b| < \delta, |1/g(a) - 1/g(b)| < \epsilon$.  We have shown that $\hat F_n \rightrightarrows F \ (a.s.)$.  Hence, with probability 1,
there exists an integer $N$ such that for any $n > N$ and $t \in [0, 1]$, we have $|\hat F_n(t) - F(t)| < \delta.$ 
$\int_0^1 |\frac{1}{g(\hat F_n(t))} - \frac{1}{g(F(t))} | dt  \le \int_0^1 \epsilon dt = \epsilon. $  Therefore, we have shown that 
\begin{eqnarray*}
\int_0^1 |\frac{1}{g(\hat F_n(t))} - \frac{1}{g(F(t))} | dt   \xrightarrow{a.s.} 0.
\end{eqnarray*}
Finally, based on the simple inequality $(\sqrt{a} - \sqrt{b})^2 \le |a-b|$, we have
$$  \int_0^1  (\sqrt{\hat f_n (t)} - \sqrt{f(t)})^2dt   \le  \int_0^1  |\hat f_n (t) - f(t)|dt  \xrightarrow{a.s.} 0. $$
\end{proof}

\begin{lemma}
\label{lem:karcher}
Let $\{\gamma_i\}$ be a sequence of warping functions that satisfy Condition 3 in Sec. \ref{sec:theory}, and $\bar \gamma$ be the Karcher mean of $\{\gamma_i^{-1}\}$.  Then $\bar \gamma$ converges to $\gamma_{id}$ almost surely.  That is, 
$$ || (1, \bar \gamma) - 1|| \xrightarrow{a.s.} 0 \ \ \  (\mbox{when } n \rightarrow \infty)$$
\end{lemma}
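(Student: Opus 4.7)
The strategy is to combine the closed-form expression for the Karcher mean from Sec. \ref{sec:km} with a Strong Law of Large Numbers in the Hilbert space $L^2[0,1]$. Set $a_i = \sqrt{\dot\gamma_i^{-1}}$. Each $a_i$ lies on the unit sphere of $L^2[0,1]$, since $\|a_i\|^2 = \int_0^1 \dot\gamma_i^{-1}(t)\,dt = 1$, and by the Karcher-mean formula the SRVF of $\bar\gamma$ satisfies $\sqrt{\dot{\bar\gamma}} = \bar a_n/\|\bar a_n\|$, where $\bar a_n = \tfrac{1}{n}\sum_{i=1}^n a_i$. Since $(1,\bar\gamma) = (1\circ\bar\gamma)\sqrt{\dot{\bar\gamma}} = \sqrt{\dot{\bar\gamma}}$, the goal reduces to showing $\|\bar a_n/\|\bar a_n\| - 1\| \to 0$ almost surely in $L^2$.

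First I would apply a Hilbert-space SLLN (Mourier's theorem) to the $L^2$-valued random elements $a_i$. The pointwise bound $a_i(t)\in[m_\gamma,M_\gamma]$ from Condition 3 immediately gives $\|a_i\|_{L^\infty}\le M_\gamma$ and hence Bochner integrability. Combining this with independence and the shared mean $E a_i \equiv \beta$ (viewed as a constant function in $L^2$), Mourier's theorem yields $\bar a_n \to \beta$ almost surely in $L^2$. Continuity of the norm then gives $\|\bar a_n\| \to \|\beta\|_{L^2} = \beta$ almost surely.

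Second, I would close the argument by a short normalization estimate. By the triangle inequality,
\[
\left\|\frac{\bar a_n}{\|\bar a_n\|} - 1\right\| \le \left\|\frac{\bar a_n}{\|\bar a_n\|} - \frac{\bar a_n}{\beta}\right\| + \frac{1}{\beta}\bigl\|\bar a_n - \beta\bigr\| = \frac{\bigl|\beta - \|\bar a_n\|\bigr|}{\beta} + \frac{1}{\beta}\bigl\|\bar a_n - \beta\bigr\|,
\]
and both summands vanish almost surely by the preceding step. Rewriting $\sqrt{\dot{\bar\gamma}} = \bar a_n/\|\bar a_n\|$ and recalling $(1,\bar\gamma) = \sqrt{\dot{\bar\gamma}}$ yields $\|(1,\bar\gamma) - 1\| \to 0$ almost surely, which is the stated conclusion.

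The principal obstacle is the legitimate invocation of an infinite-dimensional SLLN; however, the uniform pointwise bounds on $\dot\gamma_i^{-1}$ in Condition 3 make Bochner integrability and Mourier's theorem directly applicable, so no delicate moment or tightness analysis is required. The remaining work is a purely deterministic continuity argument on the unit sphere of $L^2$, using the strict positivity of $\beta$ to ensure that the projection map $v \mapsto v/\|v\|$ is continuous in a neighborhood of the limit.
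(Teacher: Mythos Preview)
Your argument is correct and reaches the same conclusion, but the route differs from the paper's. You invoke Mourier's SLLN for $L^2$-valued random elements to obtain $\bar a_n \to \beta$ almost surely, then finish with a continuity estimate for the radial projection $v\mapsto v/\|v\|$. The paper instead proves the Hilbert-space SLLN by hand for this particular sequence: it expands $E\|S_n - n\beta\|^4$ using independence and the unit-norm constraint $\|\sqrt{\dot\gamma_i^{-1}}\|=1$, obtains a polynomial bound in $n$, and then applies a fourth-moment Chebyshev inequality together with Borel--Cantelli to get $\|\tfrac{1}{n}S_n-\beta\|\to 0$ almost surely. The concluding normalization step is essentially the same in both proofs.

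What each approach buys: your proof is shorter and conceptually cleaner, since Mourier's theorem packages the entire law-of-large-numbers step; the uniform bound $a_i(t)\in[m_\gamma,M_\gamma]$ makes Bochner integrability (and strict positivity of $\|\bar a_n\|$) immediate. The paper's proof is more self-contained and elementary, requiring no outside Banach-space probability result, at the cost of the explicit fourth-moment computation. One small point: Mourier's theorem in its classical form assumes i.i.d.\ rather than merely independent with common mean; you should either note that the paper's own proof treats $\{\sqrt{\dot\gamma_i^{-1}}\}$ as i.i.d.\ (so this reading of Condition~3 is consistent), or remark that the uniform bound lets you appeal to an independent-but-not-identically-distributed version of the SLLN in type-2 spaces.
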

\begin{proof}
By assumption, $E(\sqrt {\dot \gamma_i^{-1}(t)}) \equiv \beta > 0, i = 1, \cdots, n.$  Let $S_n = n \bar \gamma = \sum_{i=1}^n \sqrt {\dot \gamma_i^{-1}}$. As $\{\sqrt {\dot \gamma_i^{-1}}\}$ are i.i.d., 
\begin{eqnarray*}
& &E\left(\left\Vert S_n -n \beta\right\Vert ^4\right) = E\left(\left\Vert\sum_{i=1}^n \left(\sqrt {\dot \gamma_i^{-1}} - \beta\right)\right\Vert^4\right) \\
&=& nE\left(\left\Vert\sqrt {\dot \gamma_1^{-1}} - \beta\right\Vert^4\right) + n(n-1) \left(E\left(\left\Vert\sqrt {\dot \gamma_1^{-1}} - \beta\right\Vert^2\right)\right)^2 \\
& & + 2n(n-1)  \left(E\left(\int_0^1\left(\sqrt {\dot \gamma_1^{-1}(t)} - \beta\right) \left(\sqrt{\dot \gamma_2^{-1}(t)} - \beta\right)dt\right)^2\right)
\end{eqnarray*}
As $||\sqrt {\dot \gamma_1^{-1}}|| = 1$, there exist positive constants $C$ and $N$, such that $E(|| S_n -n \beta||^4) < Cn$ when $n > N$.  

Using the generalized Chebyshev inequality, for any $\epsilon > 0$ and $n > N$, 
$$ P\left(\left\Vert\frac{S_n - n\beta}{n}\right\Vert > \epsilon\right) \le \frac{1}{(n\epsilon)^4} E(|| S_n -n \beta||^4) \le \frac{C}{\epsilon^4 n^2}. $$
This indicates that $\sum_{n=1}^\infty P(||S_n-n\beta|| \ge n\epsilon) < \infty$.  By the Borel-Cantelli lemma,
$P(||S_n-n\beta|| \ge n\epsilon \ \ i.o.) = 0$. Therefore, 
$ || \frac{1}{n}\sum_{i=1}^n \sqrt {\dot \gamma_i^{-1}} - \beta || \xrightarrow{a.s.} 0. $  
Finally, we have 
\begin{eqnarray*}
|| (1, \bar \gamma) - 1||  &=& || \sqrt {\dot {\bar \gamma}}  - 1 ||  = \left\Vert \frac{\frac{1}{n}\sum_{i=1}^n \sqrt {\dot {\gamma}_i^{-1}}} {\frac{1}{n}||\sum_{i=1}^n \sqrt {\dot {\gamma}_i^{-1}} ||} - 1 \right\Vert 
\xrightarrow{a.s.} 0. 
\end{eqnarray*}

\end{proof}

\begin{lemma}
\label{lem:poisson}
Assume $Y_m$ is a random variable following a Poisson distribution with mean $\Lambda_m$.  If $\lambda_m \ge \alpha \log(m), \alpha > 1$ for sufficiently large $m$ (Condition 4 in Sec. \ref{sec:theory}), then $Y_m \rightarrow \infty \ (a.s.)$ when $m \rightarrow \infty$. 
\end{lemma}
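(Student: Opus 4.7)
The plan is to prove $Y_m \to \infty$ almost surely by a standard first Borel--Cantelli argument: it suffices to establish that for every fixed integer $K \ge 0$,
\[
\sum_{m=1}^{\infty} P(Y_m \le K) < \infty,
\]
since then $P(Y_m \le K \text{ i.o.}) = 0$, and taking a countable union over $K$ gives $P(Y_m \not\to \infty) = 0$.

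To bound $P(Y_m \le K)$, I would write it as a sum of the first $K+1$ Poisson probabilities,
\[
P(Y_m \le K) = \sum_{k=0}^{K} e^{-\Lambda_m}\frac{\Lambda_m^{k}}{k!},
\]
and then control each term by exploiting the monotonicity of $\Lambda \mapsto e^{-\Lambda}\Lambda^{k}$. A short calculation shows this function is decreasing in $\Lambda$ for $\Lambda > k$. Under Condition 4, we have $\Lambda_m \ge \alpha \log m$ for all sufficiently large $m$, which eventually exceeds $K$. Consequently, for $m$ large enough,
\[
e^{-\Lambda_m}\Lambda_m^{k} \;\le\; e^{-\alpha \log m}(\alpha \log m)^{k} \;=\; m^{-\alpha}(\alpha \log m)^{k}, \quad k = 0,1,\ldots,K,
\]
so $P(Y_m \le K) \le C_K\, m^{-\alpha} (\log m)^{K}$ for a constant $C_K$ depending only on $K$ and $\alpha$.

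Finally, because $\alpha > 1$, the series $\sum_{m} m^{-\alpha}(\log m)^{K}$ converges (by, e.g., the integral test), giving summability of $P(Y_m \le K)$ and completing the Borel--Cantelli step. The only real subtlety is that $\{\Lambda_m\}$ is not a specified sequence but only required to lie above $\alpha \log m$; the monotonicity observation above is what lets us still get a uniform bound, and I expect this to be the most delicate point in the write-up. Everything else reduces to routine estimates of Poisson tail probabilities.
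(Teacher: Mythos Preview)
Your proposal is correct and follows essentially the same route as the paper: both bound $P(Y_m \le K)$ by a summable sequence and invoke the first Borel--Cantelli lemma to conclude $P(Y_m \le K \text{ i.o.}) = 0$ for every $K$. The only cosmetic difference is in the tail bound itself---the paper splits $e^{-\Lambda_m}$ into two exponential pieces to obtain $P(Y_m \le K) \le m^{-(1+\alpha/2)}$, whereas you use the monotonicity of $\Lambda \mapsto e^{-\Lambda}\Lambda^k$ on $\Lambda > k$ to get $P(Y_m \le K) \le C_K\, m^{-\alpha}(\log m)^K$; both bounds are summable for $\alpha > 1$ and the remainder of the argument is identical.
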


\begin{proof}
Based on the Poisson density formula, for any $K = 1, 2, \cdots,$ $P(Y_m \le K) = e^{-\lambda_m}  \sum_{k=0}^K \frac{\lambda_m^k}{k!}$.  By assumption, $\lambda_m \ge \alpha \log(m), \alpha > 1$ for sufficiently large $m$.  It is apparent that when $m$ is sufficiently large, $e^{-\frac{\alpha/2}{1+\alpha}\lambda_m} \sum_{k=0}^K \frac{\lambda_m^k}{k!} < 1.$  Hence, 
\begin{eqnarray*}
m^{1+\alpha/2} P(Y_m\le K) 
&=& m^{1+\alpha/2} e^{-\lambda_m}  \sum_{k=0}^K \frac{\lambda_m^k}{k!} \\
&=&  e^{-\frac{1+\alpha/2}{1+\alpha}\lambda_m +(1+\alpha/2) \log m}  \left(e^{-\frac{\alpha/2}{1+\alpha}\lambda_m}\sum_{k=0}^K \frac{\lambda_m^k}{k!}\right)  \\
&\le& 1 \cdot  1 = 1. 
 \end{eqnarray*}
Consequently, for sufficiently large $m$, $P(Y_m\le K) \le  \frac{1}{m^{1+\alpha/2}}$.   Hence, $\sum_{m=1}^\infty P(Y_m\le K) < \infty$.  By the Borel-Cantelli lemma, 
$$ P(\limsup \{Y_m \le K\}) = P(Y_m \le K \ \ i.o.) = 0.$$ 
Equivalently, we have $P(Y_m > K \ \ \mbox{eventually}) = 1,$ for any $K = 1, 2, ...$.  Therefore, 
$$ \lim_{m \rightarrow \infty} Y_m = \infty. \ \ (a.s.)$$
\end{proof}

\bibliographystyle{apa}    
\bibliography{references}  

\end{document}